\newif\ifshort
\def\keywords{\smallskip\noindent\textsc{Key Words. }}
\newcommand{\detitle}[1]{#1}
\newcommand{\?}[1]{\mathcal{#1}}
\newcommand{\eqdef}{\stackrel{{\text{\tiny def}}}{=}}
\renewcommand{\vec}[1]{\mathsf{\bar{#1}}}
\newcommand{\tup}[1]{\langle #1\rangle}
\newcommand{\enc}[1]{\ulcorner #1\urcorner}
\newcommand{\imp}{\to}
\newcommand{\Rel}{\mathbf{R}_\imp}
\newcommand{\thetabrule}[1]{\ensuremath{\mathsf{#1}}}
\providecommand{\Hmake@df@tag@@}[1]{}
\newcommand{\tabrulelabelr}[5][\relax]{
  \Hmake@df@tag@@{#5}%
  \Hy@GlobalStepCount\Hy@linkcounter
  \xdef\@currentHref{equation.\the\c@section.\the\Hy@linkcounter}%
  \Hy@raisedlink{\hyper@anchorstart{\@currentHref}\hyper@anchorend}%
  {\renewcommand{\arraystretch}{1}%
  \frac{ \begin{array}{c} #3 \end{array} }%
       { \begin{array}{c} #4 \end{array} }%
  \@bsphack
  \protected@write\@auxout{}%
    {\string\newlabel{#2}{{{\thetabrule{#5}}}{\thepage}{#1}{\@currentHref}{}}}%
  \:(\thetabrule{#5})}
}
\newcommand{\rlabel}[1]{\RightLabel{{\tiny\eqref{#1}}}}
\newcommand{\jdg}{\mathrel{\triangleright}}%
\newcommand{\fprv}{\Vdash}
\newtheorem{theorem}{Theorem}[section]
\newtheorem{lemma}[theorem]{Lemma}
\newtheorem{proposition}[theorem]{Proposition}
\newtheorem{fact}[theorem]{Fact}
\theoremstyle{definition}
\theoremstyle{remark}
\newtheorem{claim}[theorem]{Claim}
\tikzstyle{intermediate}=[draw=gray!90,very thick,fill=gray!20,circle]
\tikzstyle{state}=[intermediate,minimum width=.65cm,inner sep=1pt]
\tikzstyle{rstate}=[state,rectangle,rounded corners=8pt,minimum
\tikzstyle{every node}=[font=\small]
\tikzstyle{every edge}=[draw,->,>=stealth',shorten >=1pt,semithick]
\tikzstyle{accepting}=[accepting by arrow]
\tikzstyle{initial}=[initial by arrow,initial text=]
\def\@listi{\leftmargin\leftmargini
               \topsep 3\p@ \@plus\p@ \@minus\p@
               \parsep 2\p@ \@plus\p@ \@minus\p@
               \itemsep \parsep}
\providecommand{\urlstyle}[1]{}
\providecommand{\doi}[1]{\href{http://dx.doi.org/#1}{\nolinkurl{doi:#1}}}
\newcommand{\appref}{\autoref}
\renewcommand{\cite}{\citep}
\begin{document}
\renewcommand{\sectionautorefname}{Section}
\renewcommand{\subsectionautorefname}{Section}
\renewcommand{\subsubsectionautorefname}[1]{\S}
\title{Implicational Relevance Logic
  is \textsc{2-ExpTime}-Complete} \thanks{Work partially supported by
  ANR grant ReacHard 11-BS02-001-01.}  \author[S.~Schmitz]{Sylvain
  Schmitz} \address{ENS Cachan \& INRIA, France}
\email{schmitz@lsv.ens-cachan.fr}
\begin{abstract}
  We show that provability in the implicational fragment of relevance
logic is complete for doubly exponential time, using reductions to and
from coverability in branching vector addition systems.

  \keywords{Relevance logic, branching VASS, focusing proofs, complexity}
\end{abstract}
\maketitle
\section{Introduction}
Relevance logic $\mathbf{R}$~\citep{entailment1,dunn02}
provides a %
formalisation of `relevant'
implication: in such a system, the formula $A\imp B$ indicates that the
truth of $A$ is actually useful in establishing $B$; an example of an
\emph{irrelevant} implication valid in classical logic would be $B\imp
(A\imp B)$.

The pure implicational fragment $\Rel$ of $\mathbf{R}$ was developed
independently by \citet{moh50} in \citeyear{moh50} and
\citet{church51} in \citeyear{church51}, and is as such the oldest of
the relevance logics.  \Citeauthor{kripke59} already presented in
\citeyear{kripke59} a decision algorithm for provability in
$\Rel$~\citep{kripke59}, which was later extended to larger and larger
subsets of $\mathbf{R}$, like the conjunctive-implicational fragment
$\mathbf{R}_{{\imp},{\wedge}}$.  Several negative results by
\citeauthor{urquhart84} would however foil any hope for elementary
algorithms: first in \citeyear{urquhart84} when he showed the
undecidability of the full logic $\mathbf{R}$~\citep{urquhart84};
later in \citeyear{urquhart99} with a proof that
$\mathbf{R}_{{\imp},{\wedge}}$ suffers from a non primitive-recursive
complexity: it is \textsc{Ackermann}-complete~\citep{urquhart99}.
This left a gigantic gap for the implicational fragment $\Rel$,
between an earlier \textsc{ExpSpace} lower bound~\citep{urquhart90}
and the \textsc{Ackermann} upper bound shared by the variants of
\citeauthor{kripke59}'s procedure.

In this paper, we close this gap and show that provability in $\Rel$ is
\mbox{\textsc{2-ExpTime}}-complete.  Our proof relies crucially on a
recent result by \citet{demri12}, who show the
\textsc{2-ExpTime}-completeness of the coverability problem in
\emph{branching vector addition systems with states} (BVASS).  These
systems form a natural generalisation of vector addition systems, and
have been defined independently in a variety of contexts (see the
survey~\citep{acl/Schmitz10} and \autoref{sec-bvass} below), notably
that of provability in \emph{multiplicative exponential linear logic}
(\textbf{MELL}, see~\citep{vata}).  More precisely:
\begin{itemize}
\item In \autoref{sec-upb}, we show that so-called \emph{expansive}
  BVASSs can simulate proofs in $\Rel$ in a natural manner by
  exploiting the subformula property of its usual sequent calculus
  $L\Rel$.  We then show how to reduce reachability in expansive BVASS
  to coverability, thereby providing a decision procedure in doubly
  exponential time.
\item The matching hardness proof in \autoref{sec-lowb} relies on the
  one hand on \emph{comprehensive} instances of the BVASS coverability
  problem, and on the other hand on a new \emph{focusing} sequent
  calculus $F\Rel$ for $\Rel$.
\item The reduction from $\Rel$ provability to expansive BVASS
  reachability is actually a special case of a more general reduction
  proved in~\citep{lazic14} for the multiplicative exponential
  fragment of intuitionistic contractive linear logic
  (\textbf{IMELLC}), i.e.\ \textbf{IMELL} with structural contraction.
  Our reduction in \autoref{sub-exp-cov} from expansive reachability
  to coverability thus entails that \textbf{IMELLC} provability
  is \textsc{2-ExpTime}-complete, as explained in \autoref{sec-ext}.
\end{itemize}
\ifshort
Due to space constraints, some material is omitted and can be found in
the full paper available from \url{http://arxiv.org/abs/1402.0705}.

\fi
Let us first recall the formal definition of $\Rel$
before turning to that of BVASSs in \autoref{sec-bvass}.

\section{The Implicational Fragment $\Rel$}\label{sec-relevance}
The reader will find in \citep[\sectionautorefname\ 4]{dunn02} a nice
overview of the decision problem for $\mathbf{R}$, covering in
particular \citeauthor{kripke59}'s solution for
$\Rel$~\citep{kripke59} and \citeauthor{urquhart99}'s lower bound
argument for $\mathbf{R}_{{\imp},{\wedge}}$~\citep{urquhart99}.

\subsection{A \detitle{Sequent Calculus}}
We recall here the formal definition of $\Rel$ as a sequent calculus
$L\Rel$ in Gentzen's style.  Let $\?{A}$ be a countable set of atomic
propositions; we define the set of formul\ae\ as following the
abstract syntax
\begin{equation}
  A ::= a\mid A\imp A\tag{implicational formul\ae}
\end{equation}
where $a$ ranges over $\?{A}$.  We consider $\imp$ to be
right-associative, e.g.\ $A\imp B\imp C$ denotes $A\imp (B\imp C)$.  In
the following rules, we use $A,B,C,\dots$ to denote implicational
formul\ae\ and $\Gamma,\Delta,\dots$ to denote multisets of such
formul\ae; commas in e.g.\ `$\Gamma, A$' and `$\Gamma,\Delta$' denote
multiset unions of $\Gamma$ with the singleton $A$ and with $\Delta$
respectively; finally, a \emph{sequent} is a pair `$\Gamma\vdash A$'
stating that the succedent $A$ is valid assuming the antecedent
$\Gamma$ to be relevant:
\begin{equation*}
  \setlength{\arraycolsep}{8pt}
  \begin{array}{cc} 
    \tabrulelabelr[identity]{rl-ax}{}{A\vdash A}{Id}&
  \tabrulelabelr[contraction]{rl-cn}{\Gamma, A, A\vdash B}{\Gamma,
    A\vdash B}{C}\\[1.5em]
  \tabrulelabelr[left implication]{rl-impl}{\Gamma\vdash A\quad\Delta, B\vdash
    C}{\Gamma,\Delta,A\imp B\vdash C}{\imp_L}
  &
  \tabrulelabelr[right implication]{rl-impr}{\Gamma, A\vdash B}{\Gamma\vdash A\imp B}{\imp_R}
\end{array}\end{equation*}
As we work with multisets, this sequent calculus includes implicitly
the structural `exchange' rule.  It does however not feature the
classical `weakening' rule---which would defeat the very point of
relevance---nor the `cut' rule---which is admissible.  A visible
consequence of this definition is that the calculus enjoys the
\emph{subformula property}: all the formul\ae\ in rule premises are
subformul\ae\ of the formul\ae\ appearing in the corresponding
consequences.

\subsection{Decidability and \detitle{Complexity}}
With hindsight, the decision procedure of \citet{kripke59} for
provability in the implicational fragment of relevance logic can be
seen as a precursor for many later algorithms that rely on the
existence of a \emph{well quasi ordering} (wqo) for their
termination~\citep{riche98}.  This decision procedure can be
understood as an application of Dickson's Lemma to prove the
finiteness of `irredundant' proof trees for the target sequent $\vdash
A$.  Furthermore, combinatorial analyses of Dickson's Lemma as e.g.\
in~\citep{FFSS-lics2011} provide explicit upper bounds on the size of
those irredundant proofs, in the form of the Ackermann function in the
size of $A$, yielding an \textsc{Ackermann} upper bound for $\Rel$
provability, as shown by \citet{urquhart99}.

Regarding lower bounds, \citeauthor{urquhart90}
in~\citep[\sectionautorefname~9]{urquhart90} explains how to derive
\textsc{Exp\-Space}-hardness for $\Rel$, using model-theoretic
techniques to reduce from the word problem for finitely presented
commutative semigroups~\citep{mayr82}.

\subsection{Strict {\boldmath $\lambda$}-\detitle{Calculus}}
The implicational fragment $\Rel$ is in bijection with the typing
rules of the simply typed $\lambda I$-calculus, where abstracted terms
$\lambda x.t$ are well-formed only if $x$ appears free in $t$; see
\citep[\sectionautorefname~9F]{curry58}.  This means that $\Rel$
provability can be restated as the \emph{type inhabitation} problem
for the simply typed $\lambda I$-calculus.  Our complexity results
should then be contrasted with the \textsc{PSpace}-completeness of the
same problem for the simply typed $\lambda$-calculus~\citep{statman79}.

\section{Branching VASS}\label{sec-bvass}
\emph{Branching vector addition systems with states} (hereafter BVASS) have been
independently defined in several contexts%
\ifshort%
  ; see~\citep{acl/Schmitz10} for a survey.
\else%
  :
\begin{itemize}
\item in computational linguistics~\citep{acl/Schmitz10}, as a means
  of modelling grammatical dependencies between tree nodes in
  \emph{unordered vector grammars with dominance
    edges}~\citep{uvgdl}---they have since been related with
  \emph{abstract categorial grammars}~\citep{acg} and \emph{minimalist
    grammars}~\citep{mgmell}---;
\item in computational logic, as a way to attack the decision problem for
  $\mathbf{MELL}$ through counter machines~\citep{vata},
\item in cryptographic protocol
  verification, as a means to reason in rewriting systems
  modulo associativity and commutativity of some
  symbols~\citep{bvass}.
\end{itemize}
They have furthermore been linked to the satisfiability problem of
two-variables FO over data trees~\citep{2vl,dimino13} and the modelling
of parallel programming libraries~\citep{bouajjani13}.
\fi

\subsection{Formal \detitle{Definitions}}\label{sub-bvass}
Given $d$ in $\+N$, we write `$\vec 0$' for the null vector in
$\+N^d$, and for $0<i\leq d$, `$\vec e_i$' for the unit vector in
$\+N^d$ with $1$ on coordinate $i$ and $0$ everywhere else.  Let
$U_d\eqdef\{\vec e_i,-\vec e_i\mid 0<i\leq d\}$.
Syntactically, an \emph{ordinary BVASS} is a tuple
$\?B=\tup{Q,d,T_u,T_s}$ where $Q$ is a finite set of \emph{states},
$d$ is a \emph{dimension} in $\+N$, and $T_u\subseteq Q\times
U_d\times Q$ and $T_s\subseteq Q^3$ are respectively finite sets
of \emph{unary} and \emph{split} rules.  We denote unary rules
$(q,\vec u,q_1)$ in $T_u$ with $\vec u$ in $U_d$ by
`$q\xrightarrow{\vec u}q_1$' and split rules $(q,q_1,q_2)$ in $T_s$ by
`$q\to q_1+q_2$'.

We define the semantics of an ordinary BVASS through a deduction system
over \emph{configurations} $(q,\vec v)$ in $Q\times\+N^d$:
\begin{equation*}
  \tabrulelabelr[increment]{bv-i}{q,\vec v}{q_1,\vec v+\vec e_i}{incr}
  \qquad
  \tabrulelabelr[decrement]{bv-d}{q,\vec v+\vec e_i}{q_1,\vec v}{decr}
  \qquad
  \tabrulelabelr[split]{bv-s}{q,\vec v_1+\vec v_2}{q_1,\vec v_1\quad q_2,\vec
    v_2}{split}
\end{equation*}
respectively for unary rules %
$q\xrightarrow{\vec e_i}q_1$ and $q\xrightarrow{-\vec e_i}q_1$ in
$T_u$ and a split rule $q\to q_1+q_2$ in $T_s$; in \eqref{bv-s} `$+$'
denotes component-wise addition in $\+N^d$.  Such a deduction system
can be employed either \emph{top-down} or \emph{bottom-up} depending
on the decision problem at hand (as with tree automata); the top-down
direction will correspond in a natural way to goal-directed
\emph{proof search} in the sequent calculus of
\autoref{sec-relevance}.

\subsubsection{Ordinary BVASSs\nopunct} are a slight restriction over
BVASSs, which would in general allow any vector in $\+Z^d$ in unary
rules.  Because they often lead to more readable proofs, we only
employ ordinary BVASSs in this paper.  This is at no loss of
generality, since one can build an ordinary BVASS `equivalent' to a
given BVASS in logarithmic space, where equivalence should be
understood relative to the reachability and coverability problems;
see \citep{acl/Schmitz10} for details.

\subsubsection{Reachability.\nopunct} 
Branching VASSs are associated with a natural decision problem:
\emph{reachability} asks, given a BVASS $\?B$, a root state $q_r$, and
a leaf state $q_\ell$, whether there exists a deduction tree with root
label $(q_r,\vec 0)$ and every leaf labelled $(q_\ell,\vec 0)$; such a
deduction tree is called a \emph{reachability witness}.  De Groote et
al.~\citep{vata} have shown that this problem is recursively
equivalent to $\mathbf{MELL}$ provability, and it is currently unknown
whether it is decidable---both problems are however known to be of
non-elementary computational complexity~\citep{lazic14}.

Let us introduce some additional notation that will be handy in
proofs.  We write `$\?B,T,q_\ell\jdg q,\vec v$' if there exists a
deduction tree of $\?B$ with root label $(q,\vec v)$ and leaves
labelled by $(q_\ell,\vec 0)$, which uses each rule in $T\subseteq
T_u\uplus T_s$ at least once.  Such \emph{root judgements} can be
derived through the deduction system
\begin{gather*}
   \frac{}{\?B,\emptyset,q_\ell\jdg q_\ell,\vec 0}
   \qquad
   \frac{\?B,T,q_\ell\jdg q_1,\vec v+\vec e_i}
        {\?B,T\cup\{q\xrightarrow{\vec e_i}q_1\},q_\ell\jdg q,\vec v}
   \\[1em]
   \frac{\?B,T,q_\ell\jdg q_1,\vec v}
        {\?B,T\cup\{q\xrightarrow{-\vec e_i}q_1\},q_\ell\jdg q,\vec v+\vec e_i}
   \qquad
   \frac{\?B,T_1,q_\ell\jdg q_1,\vec v_1
         \quad
         \?B,T_2,q_\ell\jdg q_2,\vec v_2}
    {\?B,T_1\cup T_2\cup\{q\to q_1+q_2\},q_\ell\jdg q,\vec v_1+\vec v_2}
\end{gather*}
We write more simply `$\?B,q_\ell\jdg q,\vec v$' if there exists
$T\subseteq T_u\uplus T_s$ such that $\?B,T,q_\ell\jdg q,\vec v$.
With these notations, the reachability problem asks whether
$\?B,q_\ell\jdg q_r,\vec 0$.

\subsection{Root Coverability}\label{sub-cov}
Our interest in this paper lies in a relaxation of the reachability
problem, where we ask instead to \emph{cover} the root: given as
before $\tup{\?B,q_r,q_\ell}$, we ask whether there exists
a \emph{coverability witness}, i.e.\ a deduction tree with root
$(q_r,\vec v)$ for some $\vec v$ in $\+N^d$ and leaves
$(q_\ell,\vec 0)$; in other words whether $\?B,q_\ell\jdg q_r,\vec
v$ for some $\vec v$ in $\+N^d$.

This problem was shown decidable by \citet{bvass}, and was later
proven \textsc{2-ExpTime}-complete by \citet{demri12} in a slight
variant called \emph{branching vector addition systems} (BVAS):
\begin{fact}[\citenum{demri12}, \theoremautorefname~8 and \theoremautorefname~21]\label{fc-bvass}
  BVAS coverability is \textsc{2-ExpTime}-complete.
\end{fact}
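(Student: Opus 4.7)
The cited fact is stated as a black box with a pointer to~\citep{demri12}, so the task is to sketch how one \emph{would} prove \textsc{2-ExpTime}-completeness of BVAS coverability from scratch. The plan is to handle the two matching bounds separately, following the familiar template of counter-machine complexity proofs adapted to the branching setting.

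\textbf{Upper bound.} I would extend Rackoff's short-witness technique from plain VAS to the branching case. The goal is to show that whenever a coverability witness exists, there is one whose underlying deduction tree has depth and maximal coordinate value bounded by a doubly exponential function of $|\?B|$. Concretely, I would proceed by induction on the dimension $d$: if all counters along a root-to-leaf path stay below a threshold $B_d$, a pumping/shortening argument cuts repeated configurations and yields a path of length at most $|Q|\cdot B_d$; otherwise some counter first crosses the threshold, and one restricts attention to the remaining $d-1$ counters, invoking the inductive bound $B_{d-1}$. The branching rule \eqref{bv-s} is the genuinely new ingredient: at a split node the vector $\vec v_1+\vec v_2$ must be distributed between the two subtrees, so the naive recursion blows up because each child carries at most the parent's budget but the \emph{number} of children is itself governed by the tree depth. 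I would absorb this by bounding the total tree size (not just depth) by an exponential tower of height two, which is enough for an alternating \textsc{ExpSpace} decision procedure; since $\textsc{AExpSpace}=\textsc{2-ExpTime}$, this yields the upper bound. An alternate route is to translate BVAS coverability to provability in some fragment of linear logic with known complexity, but the direct Rackoff-style argument is cleaner to quantify.

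\textbf{Lower bound.} I would reduce from acceptance of an alternating Turing machine running in exponential space. The natural encoding uses the tree structure of a BVAS deduction to mimic the alternation tree: existential configurations are simulated by unary rules that deterministically guess, and universal configurations by split rules \eqref{bv-s} that force both successors to be covered. The exponentially-long tape is encoded in counters by the standard trick of representing a position as its binary value, with auxiliary counters maintaining invariants (complement counters summing to a known constant) to enforce deterministic decrement/increment of the `current head position' counter. The main challenge here is to enforce \emph{synchronisation} between the two branches of a split, since once the vector is partitioned the two subcomputations are independent: one must pay great care to ensure that both children inherit a consistent copy of the tape contents, which is typically achieved by engineering split rules that can only succeed when the parent vector lies in a precisely controlled subset.

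\textbf{Main obstacle.} The delicate step is the upper bound: classical Rackoff bounds rely on linearity of witnessing runs, whereas a BVAS witness is a tree whose width can a priori be as large as the depth. Controlling the interaction between splitting and counter blow-up --- in particular, showing that the doubly exponential bound on coordinate values is preserved under arbitrary rebalancings of $\vec v_1+\vec v_2$ at split nodes --- is the technical heart of the argument, and is precisely what~\citep{demri12} carry out.
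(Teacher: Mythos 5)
This statement is a \emph{Fact}: the paper does not prove it but imports it wholesale from \citep{demri12} (their Theorems~8 and~21), so there is no internal proof to compare your sketch against. The closest the paper comes is \autoref{sec-bvas}, where the proof of \autoref{cor-bvass} extracts the quantitative content of Theorem~8 of \citep{demri12}: a coverability witness, if one exists, can be taken of \emph{height} at most $H=L^{(3d)!}$ with coordinate values truncable to $B=H^2$, and its existence is then checked by an alternating Turing machine working in space $\log B$, whence membership in \textsc{2-ExpTime} via the equality of alternating exponential space and doubly exponential deterministic time. Your upper-bound outline is broadly consistent with this, but two points need correcting. First, the invariant that makes the alternating procedure work is a doubly exponential bound on the \emph{height} of the deduction tree together with a doubly exponential bound on the coordinate values at any node --- not a bound on the \emph{total tree size}, which plays no role: the alternating machine explores one root-to-leaf branch per computation path and never materialises the whole tree, so the total size may be exponential in the height without harm. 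Second, the step you yourself single out as ``the genuinely new ingredient'' --- carrying the Rackoff induction through split nodes, where the budget $\vec v_1+\vec v_2$ is partitioned between the two subtrees --- is exactly the technical content of Theorem~8 of \citep{demri12} and is deferred rather than carried out; as written, the upper bound is a plan, not a proof.

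The lower bound is in the same situation. Reducing from alternating exponential-space Turing machines is the right target class, and using split rules for universal branching is natural, but the synchronisation problem you flag --- ensuring that both children of a split inherit consistent copies of the simulated configuration once the counter vector has been partitioned --- is precisely the difficulty that the Lipton-style construction of \citep{demri12} is engineered to overcome, and your sketch names it without resolving it. In short, the proposal is a reasonable high-level roadmap of how such a result is established, and it correctly identifies where the difficulties lie, but both halves stop exactly at the point where the real work begins; it therefore cannot substitute for the citation, which is all the paper itself relies on.
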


Branching VAS are not equipped with a state space $Q$.  Their
coverability problem is stated slightly differently, but is easy to
reduce in both directions to BVASS coverability.  This would not be
worth mentioning here if it were not for the following instrumental
corollary of their proof, which exploits an encoding of
$d$-dimensional BVASSs into $(d+6)$-dimensional BVASs:
\begin{restatable}{corollary}{corbvass}\label{cor-bvass}
  Coverability in a BVASS $\?B=\tup{Q,d,T_u,T_s}$ can be solved in
  deterministic time $2^{2^{O(n\cdot\log (n\cdot\log |Q|))}}$, where $n$ denotes
  the size of the representation of $\tup{d,T_u}$.
\end{restatable}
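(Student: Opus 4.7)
The plan is to reduce BVASS coverability to BVAS coverability and then apply \autoref{fc-bvass}. Such a reduction is already present in the proof of \autoref{fc-bvass} in \citep{demri12}, which simulates a $d$-dimensional BVASS $\?B=\tup{Q,d,T_u,T_s}$ by a BVAS $\?B'$ of dimension $d+6$ having the same coverability status. My task is therefore to reinspect that construction in order to bound the size of $\?B'$ as a function of $n$ and $\log|Q|$, and to combine this bound with the explicit complexity of the underlying BVAS coverability procedure.

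For the size analysis, the crucial point is that the six auxiliary counters of $\?B'$ encode the current control state of $\?B$ in a compact, essentially binary way, rather than via the naive unary one-hot encoding that would cost one dimension per element of $Q$. As a consequence, each transition of $\?B'$ carries integer coefficients of magnitude polynomial in $|Q|$ and can be written using $O(n+\log|Q|)$ bits; moreover the split rules of $\?B$ are handled by a uniform BVAS gadget acting on the six bookkeeping counters, rather than being individually transcribed into $\?B'$. I would then verify that $\?B'$ has dimension $d'=O(n)$, overall representation size $m=\mathrm{poly}(n,\log|Q|)$, and is constructible from $\?B$ in polynomial time.

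To conclude, I would extract from \citep{demri12} the precise bound $2^{2^{O(d'\cdot\log m)}}$ for their deterministic BVAS coverability procedure, and substitute $d'=O(n)$ and $m=\mathrm{poly}(n,\log|Q|)$ to obtain the announced running time of $2^{2^{O(n\cdot\log(n\cdot\log|Q|))}}$. The main obstacle is that \autoref{fc-bvass} advertises only a qualitative \textsc{2-ExpTime} bound: recovering the quantitative dependence on $|Q|$ claimed in the corollary requires opening both the encoding of states into six extra counters and the complexity analysis in \citep{demri12}, tracking the contribution of the dimension and of the maximum rule coefficient separately.
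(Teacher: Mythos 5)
You have the right reduction target---the same $(d+6)$-dimensional BVAS encoding from \citep{demri12} that the paper uses, where a state $q_i$ is stored as the pair $(i,|Q|-i)$ in bookkeeping coordinates so that individual rule vectors have entries of magnitude $O(|Q|+\|T_u\|)$---but your complexity accounting has a gap that would sink the argument if carried out literally. The claim that $\?B'$ has \emph{overall representation size} $m=\mathrm{poly}(n,\log|Q|)$ is false: the transfer rules returning from the intermediate copies of the state encoding contribute one unary rule per state of $Q$, and, more fundamentally, the split rules of $\?B$ form an arbitrary subset of $Q^3$ and must each be transcribed as a split vector $\enc{q}^0-\enc{q_1}^1-\enc{q_2}^2$ of $\?B'$; there is no ``uniform gadget'' that can replace them, since $T_s$ is input data whose cardinality may be polynomial in $|Q|$. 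Hence $m=\Omega(|Q|+|T_s|)$, and since the whole point of the corollary is to tolerate $|Q|$ exponential in $n$ (this is exactly how it is used in the proof of \autoref{th-upb}), a bound of the form $2^{2^{O(d'\cdot\log m)}}$ would degrade to $2^{2^{O(n\cdot\log|Q|)}}$, i.e.\ triply exponential in the provability instance---precisely what must be avoided.

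What actually saves the corollary, and what the paper's proof in \appref{sec-bvas} isolates, is that the quantitative bound extracted from \citep{demri12} is \emph{not} stated in terms of the total representation size but in terms of the maximum infinity norm $L$ of the rule and root vectors together with the dimension: a coverability witness can be truncated to height $H=L^{(3d)!}$ with counter values capped at $B=H^2$, and checked by an alternating machine in space $\log B$; the \emph{number} of rules of $\?B'$ enters only through that machine's input and is harmless. Since the construction guarantees $\|T'_u\cup T'_s\|+\|\vec v_r\|=O(|Q|+\|T_u\|)$, one gets $\log L=O(n+\log|Q|)$ and $\log B=2^{O(d\cdot\log d)}\cdot O(n+\log |Q|)=2^{O(n\cdot\log(n\cdot\log|Q|))}$, whence the stated deterministic time $2^{O(\log B)}$. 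The missing step in your proposal is thus to replace the size parameter $m$ by the norm parameter $L$ and to track the factorial-in-$d$ exponent of the witness-height bound explicitly; with the size-based bound you posit, the corollary does not follow.
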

\noindent
Corollary~\ref{cor-bvass} entails that coverability remains in
\textsc{2-ExpTime} for BVASSs with double exponential state space.
This is an easy result, which we show in \appref{sec-bvas}.

\section{Upper Bound}\label{sec-upb}
In order to show a \textsc{2-ExpTime} upper bound for $\Rel$
provability, we introduce as an intermediate decision problem
the \emph{expansive reachability} problem for BVASS
(\autoref{sub-exp}).  Then, the first step of our proof
in \autoref{sub-rel-bvass} takes us from the sequent calculus $L\Rel$
to reachability in expansive BVASS.  This is a simple construction
that relies on the subformula property of $L\Rel$, and is actually a
particular case of a more general reduction shown
in \citep[\propositionautorefname~9]{lazic14}.  The new technical
result here is the second step: a reduction from expansive BVASS
reachability to BVASS coverability, which is shown
in \autoref{sub-exp-cov}.  This new reduction also entails new upper
bounds for provability in extensions of $L\Rel$ studied
in \citep{lazic14}; see \autoref{sec-ext}.

\subsection{Expansive Reachability}\label{sub-exp}
An \emph{expansive BVASS} is a BVASS with an additional deduction
rule:
\begin{equation*}
   \tabrulelabelr[expansion]{bv-e}{q,\vec v+\vec e_i}{q,\vec v+2\vec e_i}{expansion}
\end{equation*}
Note that expansions could be simulated by unary rules
$q\xrightarrow{-\vec e_i}q_i\xrightarrow{\vec
e_i}q'_i\xrightarrow{\vec e_i}q$ for all $q$ in $Q$ and $0<i\leq d$;
we prefer to see them as new deduction rules.

This yields a new rule for root judgements, which we denote
using `$\jdg_e$' to emphasise that we allow expansion rules:
\begin{equation*}
   \frac{\?B,T,q_\ell\jdg_e q,\vec v+2\vec e_i}{\?B,T,q_\ell\jdg_e
   q,\vec v+\vec e_i}
\end{equation*}
The \emph{expansive reachability} problem then asks, given an
expansive BVASS $\?B$ and two states $q_r$ and $q_\ell$, whether
$\?B,q_\ell\jdg_e q_r,\vec 0$.

\subsection{From {\boldmath $L\Rel$} to Expansive Reachability}
\label{sub-rel-bvass}
We prove here the following reduction:
\begin{proposition}\label{prop-rel-bvass}
  There is a logarithmic space reduction from provability in $\Rel$ to
  expansive reachability in ordinary BVASSs.
\end{proposition}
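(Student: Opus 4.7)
The plan is to construct, from a target formula $A_0$ of $\Rel$, an ordinary expansive BVASS $\?B_{A_0}$ whose configurations are in bijection with sequents of $L\Rel$ over subformulas of $A_0$, and whose deduction rules simulate the calculus rule-for-rule when read top-down. The key is the subformula property noted in \autoref{sec-relevance}: only subformulas of $A_0$ may appear in a proof of ${\vdash}A_0$. Let $S$ denote the finite set of such subformulas and take as dimension $d=|S|$, with one coordinate per formula $B\in S$. A sequent $\Gamma\vdash A$ with $A\in S$ and $\Gamma$ a multiset over $S$ will correspond to a configuration $(A,\vec v_\Gamma)$, where $(\vec v_\Gamma)_B$ is the multiplicity of $B$ in $\Gamma$; the goal sequent ${\vdash}A_0$ thus becomes the root configuration $(A_0,\vec 0)$. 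The state set is $S\uplus\{q_\ell\}$ together with, for each pair $(C,A\imp B)\in S\times S$, two fresh intermediate states $q_{C,A\imp B}$ and $q'_{C,A\imp B}$ used to simulate one application of the left-implication rule.

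The deduction rules are a direct translation of $L\Rel$. The axiom \eqref{rl-ax} is implemented by a unary decrement $B\xrightarrow{-\vec e_B}q_\ell$ for every $B\in S$, so that a pending leaf of the form $(B,\vec e_B)$ bottoms out at $(q_\ell,\vec 0)$. The right-implication rule \eqref{rl-impr} becomes a unary increment $(A\imp B)\xrightarrow{\vec e_A}B$, which moves the succedent from $A\imp B$ to $B$ and adds one copy of $A$ to the antecedent. The left-implication rule \eqref{rl-impl} is the only case that does not fit a single BVASS deduction rule, and is simulated by the three-step gadget $C\xrightarrow{-\vec e_{A\imp B}}q_{C,A\imp B}\xrightarrow{\vec e_B}q'_{C,A\imp B}$ followed by a split $q'_{C,A\imp B}\to A+C$; crucially, the nondeterministic split in \eqref{bv-s} realises the nondeterministic partition of the antecedent into $\Gamma$ and $\Delta$ in the conclusion of \eqref{rl-impl}. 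Finally, the contraction rule \eqref{rl-cn}, which passes from $(B,\vec v+\vec e_A)$ to $(B,\vec v+2\vec e_A)$, coincides exactly with the expansion rule \eqref{bv-e}---this is precisely the reason we work with expansive BVASSs. The reduction sets $q_r\eqdef A_0$ and asks for $\?B_{A_0},q_\ell\jdg_e A_0,\vec 0$. Since subformulas of $A_0$ can be named by pointers into $A_0$ and the number of rules is $O(|A_0|^2)$, the construction is easily carried out in logarithmic space.

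Correctness is then proved by two straightforward inductions on derivation trees. The forward direction translates any $L\Rel$-proof of ${\vdash}A_0$ top-down into a deduction tree of $\?B_{A_0}$ by replacing each sequent-rule application by the corresponding gadget. The only point requiring some care---and arguably the main technical step---is the converse: extracting an $L\Rel$-proof from an arbitrary witness of $\?B_{A_0},q_\ell\jdg_e A_0,\vec 0$. Here one exploits the fact that each intermediate state $q_{C,A\imp B}$ or $q'_{C,A\imp B}$ has a unique incoming and a unique outgoing rule in $\?B_{A_0}$, so that any occurrence of these states in a deduction tree is forced to sit inside a complete $\imp_L$-gadget. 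Consequently, every witness decomposes canonically into blocks, each in bijection with one application of an $L\Rel$ rule, and the extracted derivation is immediately seen to be valid.
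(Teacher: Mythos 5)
Your overall architecture coincides with the paper's: one coordinate per subformula, states for succedents plus a leaf state $q_\ell$, the axiom \eqref{rl-ax} as a decrement into $q_\ell$, \eqref{rl-impr} as an increment, \eqref{rl-cn} as an expansion, and \eqref{rl-impl} as the one rule requiring a gadget with a split. However, your gadget for \eqref{rl-impl} is ordered incorrectly, and this breaks the soundness direction of your correctness claim. You perform the increment by $\vec e_B$ \emph{before} the split, so the freshly created $B$ token participates in the nondeterministic partition of \eqref{bv-s} and may be routed to the branch that must establish $\Gamma\vdash A$. Your BVASS therefore also simulates the spurious rule
\[
\frac{\Gamma,B\vdash A\quad\Delta\vdash C}{\Gamma,\Delta,A\imp B\vdash C}
\]
which is not admissible in $\Rel$. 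Concretely, take $F=(B\imp B)\imp C\imp C$: two applications of your \eqref{rl-impr} rule reach $(C,\vec e_{B\imp B}+\vec e_C)$; firing your gadget for the pair $(C,B\imp B)$ gives, after the decrement and the increment by $\vec e_B$, the vector $\vec e_B+\vec e_C$, and the split may send $\vec e_B$ to state $B$ and $\vec e_C$ to state $C$, both of which reach $(q_\ell,\vec 0)$ by the axiom rules. Hence $\?B_{F},q_\ell\jdg_e F,\vec 0$ holds, although $\vdash(B\imp B)\imp C\imp C$ is a paradigmatic irrelevant implication and is not provable in $L\Rel$.

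The fix is exactly the gadget of \autoref{fig-rel-bvass}: split first, and perform the $\vec e_B$ increment only on the branch returning to $C$, i.e.\ the rules $C\xrightarrow{-\vec e_{A\imp B}}p$, then $p\to A+p'$, then $p'\xrightarrow{\vec e_B}C$, so that the $B$ token is forced into the $\Delta$-part. With that reordering, your two inductions---together with your (correct) observation that the intermediate states have unique in- and out-rules and hence force complete gadget occurrences---go through and yield the paper's \autoref{cl-rel-bvass}.
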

Let us consider an instance $\tup{F}$ of the provability problem for
$\Rel$ for an implicational formula $F$.  The instance is positive if
and only if we can find a proof for $\vdash F$ in $L\Rel$.  Thanks to
the subformula property, we know that in such a proof, all the
sequents $\Gamma\vdash A$ must use subfomul\ae\ of $F$.  That is, if we
denote by $S$ the set of subformul\ae\ of $F$, then $\Gamma$ is in
$\+N^S$ and $A$ in $S$.

We construct from $F$ an expansive BVASS $\?B_F$ that implements proof
search in $L\Rel$ restricted to subformul\ae\ of $F$.  We define for
this $\?B_F\eqdef\tup{Q_F,|S|,T_u,T_s}$ where the state space $Q_F$
includes $S$ and a distinguished leaf state $q_\ell$.  It also includes
some intermediate states as introduced in the translations of the
rules of $L\Rel$ into rules in $T_u\cup T_s$ depicted
in \autoref{fig-rel-bvass}.
\begin{figure}[tbp]
  \centering \begin{tikzpicture}[auto,node distance=.65cm]
    \node[state,label=left:{\ref{rl-ax}:}](init0){$A$};
    \node[state,accepting by double,right=of init0](init1){$q_\ell$};   
    \path (init0) edge node{$-\vec e_{A}$} (init1);
    \node[state,right=1.3cm of init1,label=left:{\ref{rl-impl}:}](Limp0){$C$};
    \node[intermediate,right=1.1cm of Limp0,label=right:{$+$}](Limp1){};
    \draw[draw=black!40] (Limp1) ++(.2,-.2) arc (-40:40:.33);
    \node[state,above right=of Limp1](Limp2){$A$};
    \node[intermediate,below right=of Limp1](Limp3){};
    \node[state,right=of Limp3](Limp4){$C$};
    \path (Limp0) edge node{$-\vec e_{A\imp B}$} (Limp1)
      (Limp1) edge (Limp2)
      (Limp1) edge (Limp3)
      (Limp3) edge node{$\vec e_B$}(Limp4);
    \node[rstate,right=3.3cm of
      Limp1,label=left:{\ref{rl-impr}:}](Rimp0){$A\imp B$};
    \node[state,right=of Rimp0](Rimp1){$B$};
    \path (Rimp0) edge node{$\vec e_A$} (Rimp1);
  \end{tikzpicture}
  \caption{\label{fig-rel-bvass}The rules and intermediate states of $\?B_F$.}
\end{figure}
Note that \eqref{rl-cn} has no associated rule; it relies instead on
expansions in $\?B_F$.  The full state space $Q_F$ of $\?B_F$,
including intermediate states, is thus of size $O(|F|^2)$.

Let us write $\vec v_\Gamma$ for the vector in $\+N^{|S|}$ associated
with a multiset $\Gamma$ in $\+N^S$.  The proof of
\autoref{prop-rel-bvass} is a consequence of the following claim
instantiated with $A=F$ and $\Gamma=\emptyset$:
\begin{restatable}{claim}{clrelbvass}\label{cl-rel-bvass}%
  For all $\Gamma$ in $\+N^S$ and $A$ in $S$, $\Gamma\vdash A$
  if and only if $\?B_F,q_\ell\jdg_e A,\vec v_\Gamma$.%
\end{restatable}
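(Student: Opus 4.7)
The plan is to prove both directions by structural induction: the $\Rightarrow$ direction by induction on a derivation of $\Gamma \vdash A$ in $L\Rel$, and the $\Leftarrow$ direction by induction on a derivation of the root judgement $\?B_F, q_\ell \jdg_e A, \vec v_\Gamma$.

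For the forward direction, each of the four rules of $L\Rel$ has a direct encoding in $\?B_F$: the axiom $A \vdash A$ is obtained by composing the base root judgement $\?B_F, \emptyset, q_\ell \jdg_e q_\ell, \vec 0$ with the decrement $A \xrightarrow{-\vec e_A} q_\ell$; contraction is realised by a single expansion at the succedent state; the right-implication rule is realised by the increment $(A \imp B) \xrightarrow{\vec e_A} B$; and the left-implication rule is realised by the three-rule gadget of \autoref{fig-rel-bvass}, namely the decrement $C \xrightarrow{-\vec e_{A \imp B}} \text{int}_1$, the split $\text{int}_1 \to A + \text{int}_2$, and the increment $\text{int}_2 \xrightarrow{\vec e_B} C$, applied to the BVASS judgements furnished by the inductive hypothesis on the two premises $\Gamma \vdash A$ and $\Delta, B \vdash C$.

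The backward direction is more delicate because $\?B_F$ contains intermediate states that do not correspond to $L\Rel$ sequents, so one cannot directly induct on the claim as stated. I would strengthen it to all states of $\?B_F$ by attaching an $L\Rel$-interpretation to each: a subformula state $A \in S$ with vector $\vec v_\Gamma$ means ``$\Gamma \vdash A$ is provable''; the leaf $q_\ell$ forces $\vec v = \vec 0$; and the two intermediate states of the $\imp_L$-gadget for a pair $(A \imp B, C)$ mean respectively ``there exists $\Gamma_0$ with $\vec v = \vec v_{\Gamma_0}$ and $\Gamma_0, A \imp B \vdash C$ provable'' and ``there exists $\Delta$ with $\vec v = \vec v_\Delta$ and $\Delta, B \vdash C$ provable''. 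Induction on the root-judgement derivation then proceeds by case analysis on its last rule: each unary or split rule translates into a matching $L\Rel$ inference (the axiom when $q_\ell$ is reached, $\imp_R$ for an $\vec e_A$-increment at $A \imp B$, and $\imp_L$ when a split at $\text{int}_1$ is combined with the neighbouring decrement and increment), while each expansion step translates into one application of contraction on the antecedent supplied by the interpretation. The degenerate subcase of an expansion at $q_\ell$ forces $\vec 0 = \vec v + \vec e_i$ via the inductive hypothesis applied to the premise, so it is vacuous; and the rules of $\?B_F$ contain no outgoing transition from $q_\ell$, so the base rule is the only way to derive $\?B_F, q_\ell \jdg_e q_\ell, \vec v$.

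The main obstacle is precisely the design of the intermediate-state interpretations: they must compose correctly under the split rule, so that $\Gamma \vdash A$ on the $A$-branch and $\Delta, B \vdash C$ on the $\text{int}_2$-branch assemble via $\imp_L$ into $\Gamma, \Delta, A \imp B \vdash C$, matching the interpretation attached to $\text{int}_1$; and they must survive expansions by admitting a single contraction on the pending antecedent. The chosen definitions do both, and because $\text{int}_1$ only admits an outgoing split while $\text{int}_2$ only admits an outgoing $\vec e_B$-increment, the relevant cases of the induction at those states are entirely forced, leaving only routine multiset manipulations in the remaining verifications.
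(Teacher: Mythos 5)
Your proof is correct and follows essentially the same route as the paper's: a mutual structural induction on $L\Rel$ derivations and on expansive root judgements, matching \eqref{rl-ax}, \eqref{rl-impl}, \eqref{rl-impr} with their gadgets in $\?B_F$ and \eqref{rl-cn} with expansion. The paper dispatches the converse direction with ``the converse direction is similar''; your explicit strengthening of the induction hypothesis with sequent interpretations for the two intermediate states of the $\imp_L$ gadget (and the forced $\vec v=\vec 0$ at $q_\ell$) is precisely the bookkeeping that makes that ``similar'' rigorous.
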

\ifshort%
\noindent%
This results from a straightforward induction on the structure of
proofs in $L\Rel$ and expansive root judgements for $\?B_F$; see
\appref{sec-omit} for details.%
\else%
\begin{proof}
  We proceed by induction of the structure of proofs in $L\Rel$ and of
  expansive root judgements in $\?B_F$.
  \begin{description}
  \item[\nameref{rl-ax}] for the base case~\eqref{rl-ax}, $A\vdash A$
    iff $\?B_F,q_\ell\jdg_e A,\vec e_A$.
  \item[\nameref{rl-cn}] If $\Gamma,A\vdash B$ as the result
    of~\eqref{rl-cn} from $\Gamma,A,A\vdash B$ then by induction
    hypothesis $\?B_F,q_\ell\jdg_e B,\vec v_{\Gamma,A,A}$, from
    which an expansion yields $\?B_F,q_\ell\jdg_e B,\vec
    v_{\Gamma,A}$.  Conversely, if $\?B_F,q_\ell\jdg_e B,\vec
    v_{\Gamma,A}$ as the result of an expansion from
    $\?B_F,q_\ell\jdg_e B,\vec v_{\Gamma,A,A}$, then by induction
    hypothesis $\Gamma,A,A\vdash B$ and a contraction yields
    $\Gamma,A\vdash B$.
  \item[\nameref{rl-impl}] If $\Gamma,\Delta,A\imp B\vdash C$ as the
    result of~\eqref{rl-impl} applied to $\Gamma\vdash A$ and
    $\Delta,B\vdash C$, then by induction hypothesis
    $\?B_F,q_\ell\jdg_e A,\vec v_\Gamma$ and $\?B_F,q_\ell\jdg_e
    C,\vec v_{\Delta,B}$.  Then the group of rules in $\?B_F$ yields
    $\?B_F,q_\ell\jdg_e C_B,\vec v_{\Delta}$ and
    $\?B_F,q_\ell\jdg_e C_{A\imp B},\vec v_{\Gamma,\Delta}$ in the
    intermediate states, and finally $\?B_F,q_\ell\jdg_e C,\vec
    v_{\Gamma,\Delta,A\imp B}$ as desired.  The converse direction is
    similar.
  \item[\nameref{rl-impr}] If $\Gamma\vdash A\imp B$ as the result
    of~\eqref{rl-impr} applies to $\Gamma,A\vdash B$, then by
    induction hypothesis $\?B_F,q_\ell\jdg_e B,\vec v_{\Gamma,A}$,
    from which the corresponding rule of $\?B_F$ yields
    $\?B_F,q_\ell\jdg_e A\imp B,\vec v_{\Gamma}$ as desired.  The
    converse direction is similar.\qedhere
  \end{description}
\end{proof}

\fi

\subsection{From Expansive Reachability to Coverability}
\label{sub-exp-cov}
The second step of our proof that $\Rel$ provability is in
\textsc{2-ExpTime} is then to reduce expansive reachability to
coverability in BVASS.  Our reduction incurs an exponential blow-up in
the number of states, but thanks to \autoref{cor-bvass}, this still
results in a \textsc{2-ExpTime} algorithm:

\begin{proposition}\label{prop-exp-cov}
  There is a polynomial space reduction from BVASS expansive
  reachability to BVASS coverability.
\end{proposition}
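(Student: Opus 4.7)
Given the expansive BVASS $\?B=\tup{Q,d,T_u,T_s}$ with states $q_r,q_\ell$, the plan is to build in polynomial space an ordinary BVASS $\?B'$ together with states $q'_r,q'_\ell$ whose dimension and unary-rule set remain polynomial in $|\?B|$ but whose state space satisfies $|Q'|\le 2^{\mathrm{poly}(|\?B|)}$, and such that $\?B,q_\ell\jdg_e q_r,\vec 0$ if and only if $\?B'$ has a coverability witness from $q'_r$ to $q'_\ell$. Because Corollary~\ref{cor-bvass} tolerates exactly this kind of singly-exponential state-space blow-up, the \textsc{2-ExpTime} bound then follows at once.

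\textbf{Key observation and construction.} The root-judgement rule for expansion behaves as a \emph{controlled loss} at the root of a subtree: the counter drops from $\vec v+2\vec e_i$ to $\vec v+\vec e_i$, the state is preserved, and coordinate $i$ remains at least $1$. In a coverability instance the root carries an arbitrary excess, which is the bank of extra tokens needed to simulate these losses. The obstacle is that an expansion may occur arbitrarily deep in the deduction tree while the covering excess is located only at the root, and split rules already distribute the excess between the two subtrees. I address this by letting a state of $\?B'$ be a pair $(q,\alpha)$ in $Q\times A$, where $A$ is an annotation alphabet of singly-exponential size encoding a multiset of pending loss commitments on the coordinates. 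Each unary rule of $\?B$ is copied into $\?B'$ with the annotation unchanged; each split rule $q\to q_1+q_2$ of $\?B$ gives rise in $\?B'$ to one split rule $(q,\alpha)\to(q_1,\alpha_1)+(q_2,\alpha_2)$ for every legal partition $\alpha=\alpha_1+\alpha_2$, which is the source of the exponential blow-up in $|Q'|$; and for every coordinate $i$ we add a \emph{loss} rule $(q,\alpha)\xrightarrow{-\vec e_i}(q,\alpha')$ that discharges one commitment on $i$ while consuming one token. With $q'_r=(q_r,\alpha_r)$ announcing the total budget and $q'_\ell=(q_\ell,\emptyset)$, each expansion in an expansive witness of $\?B$ is mimicked by one use of the corresponding loss rule of $\?B'$; conversely, stripping the annotations from a coverability witness of $\?B'$ with root counter $\vec v$ yields an expansive reachability witness of $\?B$ whose expansions collectively dissipate $\vec v$. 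Correctness then follows by a double induction on deduction trees.

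\textbf{Main obstacle.} The delicate points are (i) ensuring the annotation is \emph{faithful}, i.e.\ no spurious loss rule may fire without the matching expansion being available in the simulated $\?B$-subtree---this is precisely why a naive alternative that simply appends self-loop decrement rules to $\?B$ fails, as it would allow coverability witnesses the original expansive semantics disallows---and (ii) keeping $|A|$ singly exponential while retaining completeness, which I expect to require a pumping-style argument on expansive deduction trees bounding the useful per-subtree commitment by $\mathrm{poly}(|\?B|)$. Once these two points are settled, applying Corollary~\ref{cor-bvass} to $\?B'$ delivers the \textsc{2-ExpTime} upper bound.
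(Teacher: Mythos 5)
Your outer frame is the right one --- build a BVASS with a singly-exponential state space and invoke Corollary~\ref{cor-bvass} --- and you correctly diagnose why naive self-loop decrements are unsound. But the core mechanism you propose has a genuine gap, concentrated in the lemma you defer: there is no polynomial (nor even singly-exponential) bound on the number of expansions needed per subtree or per branch of an expansive reachability witness. Expansions are precisely what let these systems build doubly-exponentially large counter values (this is essential to the \textsc{2-ExpTime}-hardness of \autoref{sec-lowb}), so a ``total budget'' $\alpha_r$ of pending loss commitments, drawn from an alphabet $A$ of singly-exponential size and split multiplicatively at each branching, cannot announce the number of expansions a witness will perform. The completeness direction of your reduction therefore fails on exactly the instances that matter. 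There is also a secondary accounting issue: an expansion \emph{adds} a token to the real run (read root-to-leaves), so pairing it with a rule that \emph{consumes} a token of the coverability witness forces the excess to be spent twice per expansion; this is repairable, but it signals that the commitments are tracking the wrong quantity.

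The paper's construction avoids counting expansions altogether, and this is the idea you are missing. In the forward direction an expansion is simulated by \emph{nothing}: the coverability witness simply carries values $\vec v'\geq\vec v$ pointwise (\autoref{cl-exp-cov1}), so the unbounded root excess of the coverability instance itself is the ``bank,'' with no state-based bookkeeping. What the state must record is only the set $s\subseteq[d]$ of coordinates whose \emph{topmost increment} has already occurred on the path from the root --- a $2^d$ blow-up, nothing more. The invariant is that on coordinates in $s$ the coverability value is exactly faithful (so decrements, which are restricted to fire only on coordinates of $s$, are always legitimate), while on coordinates outside $s$ the entire value is junk. In the backward direction (\autoref{cl-exp-cov2}) the junk $n\vec e_i$ accumulated on coordinate $i$ is converted into $n$ legitimate expansions at the moment of the topmost increment on $i$, where the real value is $1$ and expansions are applicable; the operation $s\cdot\vec v$ zeroes the still-junk coordinates, so the root value $\emptyset\cdot\vec v=\vec 0$ as required for reachability. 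I would encourage you to rework your annotation from ``how many losses remain'' to ``which coordinates are currently faithful'': that single change removes the need for the false pumping lemma and essentially reproduces the paper's proof.
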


\subsubsection{Topmost \detitle{Increments}.\nopunct}
Consider an instance $\tup{\?B,q_r,q_\ell}$ of the expansive
reachability problem with $\?B=\tup{Q,d,T_u,T_s}$.  Because the root
vector of an expansive reachability witness must be $\vec 0$, we can
identify along each branch of the witness and for each coordinate
$0<i\leq d$ the topmost (i.e.\ closest to the root) application of
an~\eqref{bv-i} rule---possibly no such increment ever occurs on some
branches.

Assume without loss of generality that $q_\ell$ has no outgoing
transition in $\?B$.  We construct a new BVASS
$\?B^\dagger=\tup{Q^\dagger,d,T^\dagger_u,T_s}$ with additional states
$q_\ell^i$ and unary rules $q_\ell\xrightarrow{\vec
  e_i}q_\ell^i\xrightarrow{-\vec e_i}q_\ell$ for every $0<i\leq d$.
Then $\?B,q_\ell\jdg_e q_r,\vec 0$ if and only if
$\?B^\dagger,q_\ell\jdg_e q_r,\vec 0$ (observe in particular that no
expansion in $q_\ell^i$ can occur in an expansive reachability
witness).  Additionally, the new rules allow us to assume that there
\emph{is} a topmost increment for each branch and every coordinate of
an expansive reachability witness of $\?B^\dagger$.

Let $[d]\eqdef\{1,\dots,d\}$.  The root judgement relation can be
refined as `$\jdg_e^s$' with a set $s\subseteq[d]$ of coordinates.
The intended semantics for $i\in s$ is that there is at least one
increment on coordinate $i$ earlier on the path from the root in the
expansive reachability witness.  Formally, at the leaves
\begin{gather*}
  \frac{}{\?B^\dagger,\emptyset,q_\ell\jdg_e^{[d]}q_\ell,\vec 0}
\shortintertext{since by assumption every coordinate must see an
increase.  Then, an increment is either topmost or not:}
  \frac{\?B^\dagger,T,q_\ell\jdg_e^{s\uplus\{i\}}q_1,\vec w+\vec e_i}
       {\?B^\dagger,T\cup\{q\xrightarrow{\vec e_i}q_1\},q_\ell\jdg_e^{s}q,\vec w}
  \qquad 
  \frac{\?B^\dagger,T,q_\ell\jdg_e^{s\cup\{i\}}q_1,\vec v+\vec e_i}
       {\?B^\dagger,T\cup\{q\xrightarrow{\vec e_i}q_1\},q_\ell\jdg_e^{s\cup\{i\}}q,\vec v}
\shortintertext{where $\vec w(i)=0$ and `$\uplus$' denotes disjoint
  union.  Decrements and expansions are necessarily dominated by the
  topmost increment:}
  \frac{\?B^\dagger,T,q_\ell\jdg_e^{s\cup\{i\}}q_1,\vec v}
       {\?B^\dagger,T\cup\{q\xrightarrow{-\vec e_i}q_1\},q_\ell\jdg_e^{s\cup\{i\}}q,\vec v+\vec e_i}
  \qquad
  \frac{\?B^\dagger,T,q_\ell\jdg_e^{s\cup\{i\}}q,\vec v+2\vec e_i}
       {\?B^\dagger,T,q_\ell\jdg_e^{s\cup\{i\}}q,\vec v+\vec e_i}
\shortintertext{Finally, the same topmost increments have been seen on
  both branches of a split:}
  \frac{\?B^\dagger,T_1,q_\ell\jdg_e^s q_1,\vec v_1
   \quad\?B^\dagger,T_2,q_\ell\jdg_e^s q_2,\vec v_2}
       {\?B^\dagger,T_1\cup T_2\cup\{q\to q_1+q_2\},q_\ell\jdg_e^s
  q,\vec v_1+\vec v_2}
\end{gather*}

The refined root judgements verify
\begin{equation}\label{eq-topmost}
  \?B,q_\ell\jdg_e q_r,\vec 0\text{ implies
  }\?B^\dagger,q_\ell\jdg_e^\emptyset q_r,\vec 0\;,
\end{equation}
the converse implication being immediate by removing the `$s$'
annotations.

\subsubsection{Reduction to \detitle{Coverability}.\nopunct}
We construct yet another BVASS $\?B^\ddagger=\tup{Q^\dagger\times
2^{[d]},d,T^\ddagger_u,T^\ddagger_s}$ and build a coverability
instance $\tup{\?B^\ddagger,(q_r,\emptyset),(q_\ell,[d])}$.  The idea
is to maintain a set $s\subseteq[d]$ as in the refined judgements
$\jdg_e^s$; however since we cannot test for zero we will rely
instead on nondeterminism.  Let
\begin{align}
  T_u^\ddagger
  &\eqdef%
  \label{eq-covi}\tag{incr$^\ddagger$}
        \{(q,s)\xrightarrow{\vec e_i}(q_1,s\cup\{i\})
        \mid q\xrightarrow{\vec e_i}q_1\in T_u^\dagger,s\subseteq[d]\}\\
  \label{eq-covd}\tag{decr$^\ddagger$}
  &\:\cup\:\{(q,s\cup\{i\})\xrightarrow{-\vec e_i}(q_1,s\cup\{i\})
        \mid q\xrightarrow{-\vec e_i}q_1\in T_u^\dagger,s\subseteq[d]\}\;,\\
  \label{eq-covs}\tag{split$^\ddagger$}
  T_s^\ddagger&\eqdef\{(q,s)\to(q_1,s)+(q_2,s)\mid q\to q_1+q_2\in
T_s^\dagger,s\subseteq[d]\}\;.
\end{align}

\ifshort
For $s\subseteq[d]$ and $\vec v$ in $\+N^d$, we define $s\cdot\vec v$
for each $0<i\leq d$ by
\begin{equation}
  (s\cdot\vec v)(i)\eqdef\begin{cases}\vec v(i)&\text{if }i\in s\;,\\
  0&\text{otherwise}.\end{cases}
\end{equation}
  We show the following claims in \appref{sec-omit}:
\fi
\begin{restatable}{claim}{clexpcova}\label{cl-exp-cov1}
  If $\?B^\dagger,q_\ell\jdg_e^s,q,\vec v$, then there exists $\vec
  v'\geq\vec v$ such that $\?B^\ddagger,(q_\ell,[d])\jdg (q,s),\vec
  v'$.
\end{restatable}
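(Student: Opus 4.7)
The plan is to proceed by induction on the derivation of $\?B^\dagger,q_\ell\jdg_e^s q,\vec v$. The key observation is that the coverability conclusion permits any $\vec v'\geq\vec v$, so the inequality provides slack that absorbs the expansion rule of $\?B^\dagger$, which has no direct counterpart in $\?B^\ddagger$.

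In the base case, the leaf rule of $\jdg_e^s$ forces $q=q_\ell$, $s=[d]$, and $\vec v=\vec 0$; I take $\vec v'=\vec 0$ via the leaf rule of $\jdg$ at the state $(q_\ell,[d])$. For each non-leaf inductive step, the plan is to apply on top of the coverability witness(es) supplied by the induction hypothesis the matching rule from $T_u^\ddagger\cup T_s^\ddagger$: a topmost increment uses an instance of~\eqref{eq-covi} that strictly extends the annotation from $s$ to $s\uplus\{i\}$, consuming the extra $\vec e_i$ supplied by the inductive witness $\vec v'\geq\vec w+\vec e_i$ to recover $\vec v'-\vec e_i\geq\vec w$ at $(q,s)$; a non-topmost increment uses the \eqref{eq-covi} instance whose source annotation already contains $i$; a decrement uses~\eqref{eq-covd}, yielding $\vec v'+\vec e_i\geq\vec v+\vec e_i$; and a split uses~\eqref{eq-covs}, summing witnesses to $\vec v_1'+\vec v_2'\geq\vec v_1+\vec v_2$ at $(q,s)$. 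In each case, the annotation of the target state is exactly what the side condition of the rule prescribes, and the inequality $\vec v'\geq\vec v$ propagates by a direct computation.

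The one case worth highlighting---and what might at first look like the obstacle---is the expansion step, which goes from the premise vector $\vec v+2\vec e_i$ down to $\vec v+\vec e_i$. No rule of $\?B^\ddagger$ needs to be applied here: the inductive witness $\vec v'\geq\vec v+2\vec e_i$ trivially also satisfies $\vec v'\geq\vec v+\vec e_i$, so the same coverability derivation at $(q,s\cup\{i\})$ discharges the conclusion with a weaker lower bound. The remaining work is routine bookkeeping to check that the annotation in each applied $\?B^\ddagger$ rule matches the side conditions of~\eqref{eq-covi}--\eqref{eq-covs}, so the genuine content of the claim is really just the accommodation of expansion by the ``$\geq$'' in the coverability witness.
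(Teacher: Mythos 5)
Your proposal is correct and follows essentially the same route as the paper's proof: an induction on the derivation of the refined root judgement, matching each rule of $\?B^\dagger$ with the corresponding rule of $\?B^\ddagger$ and absorbing the expansion step into the slack of the inequality $\vec v'\geq\vec v$. The case analysis (topmost vs.\ non-topmost increment, decrement, split, and the rule-free treatment of expansion) coincides with the paper's argument.
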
%
\ifshort\relax\else%
\begin{proof}%
  We prove the claim by induction over the root judgement in
$\?B^\dagger$.  For the base case with $q=q_\ell$ and $\vec v=\vec 0$,
we choose $\vec v'=\vec 0$.

For the induction step, if the judgement results from the application
of an increment rule on coordinate $i$, then
$\?B^\dagger,q_\ell\jdg_e^{s\cup\{i\}}q_1,\vec v+\vec e_i$, and by
induction hypothesis there exists $\vec v'\geq v$ such that
$\?B^\ddagger,(q_\ell,[d])\jdg (q_1,s\cup\{i\}),\vec v'+\vec e_i$.
Two cases arise depending on whether $i\not\in s$ or $i\in s$, i.e.\
whether this is the topmost increment on coordinate $i$ or not.  In
both cases, \eqref{eq-covi} yields $\?B,(q_\ell,[d])\jdg (q,s),\vec
v'$ as desired.

If the judgement results from the application of a decrement to
$\?B^\dagger,q_\ell\jdg_e^{s\cup\{i\}}q_1,\vec v$, then by induction
hypothesis there exists $\vec v'\geq\vec v$ such that
$\?B^\ddagger,(q_\ell,[d])\jdg (q_1,s\cup\{i\}),\vec v'$,
and~\eqref{eq-covd} shows $\?B^\ddagger,(q_\ell,[d])\jdg
(q,s\cup\{i\}),\vec v'+\vec e_i$ where $\vec v'+\vec e_i\geq \vec
v+\vec e_i$ as desired.

If the judgement results from an expansion applied to
$\?B^\dagger,q_\ell\jdg_e^{s\cup\{i\}}q,\vec v+2\vec e_i$, then by
induction hypothesis there exists $\vec v'\geq\vec v+2\vec e_i\geq\vec
v+\vec e_i$ and $\?B^\ddagger,(q_\ell,[d])\jdg (q,s\cup\{i\}),\vec
v'$ as desired.

Finally, if the judgement results from a split with premises
$\?B^\dagger,q_\ell\jdg_e^s q_1,\vec v_1$ and
$\?B^\dagger,q_\ell\jdg_e^s q_2,\vec v_2$, then by induction
hypothesis there exist $\vec v'_1\geq \vec v_1$ and $\vec v'_2\geq\vec
v_2$ such that $\?B^\ddagger,(q_\ell,[d])\jdg(q_1,s),\vec v'_1$ and
$\?B^\ddagger,(q_\ell,[d])\jdg(q_2,s),\vec v'_2$.  Thanks
to~\eqref{eq-covs}, this yields
$\?B^\ddagger,(q_\ell,[d])\jdg(q,s),\vec v'_1+\vec v'_2$ where
$\vec v'_1+\vec v'_2\geq\vec v_1+\vec v_2$ as desired.
\end{proof}%
\fi%
\ifshort\vspace*{-1em}\else
For the converse direction, given $s\subseteq[d]$ and $\vec v$ in
$\+N^d$, we define $s\cdot\vec v$ for each $0<i\leq d$ by
\begin{equation}
  (s\cdot\vec v)(i)\eqdef\begin{cases}\vec v(i)&\text{if }i\in s\;,\\
  0&\text{otherwise}.\end{cases}
\end{equation}
\fi
\begin{restatable}{claim}{clexpcovb}\label{cl-exp-cov2}
  If $\?B^\ddagger,(q_\ell,[d])\jdg (q,s),\vec v$, then
  $\?B^\dagger,q_\ell\jdg_e q,s\cdot\vec v$.
\end{restatable}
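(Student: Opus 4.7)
The plan is to proceed by structural induction on the derivation of $\?B^\ddagger,(q_\ell,[d])\jdg (q,s),\vec v$ in the coverability root-judgement system for $\?B^\ddagger$. The base case forces $q=q_\ell$, $s=[d]$ and $\vec v=\vec 0$; then $s\cdot\vec v=\vec 0$ and the axiom of $\?B^\dagger$ directly yields $\?B^\dagger,q_\ell\jdg_e q_\ell,\vec 0$. For the induction step, I would dispatch on the rule among~\eqref{eq-covi}, \eqref{eq-covd} and \eqref{eq-covs} applied at the root.

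The decrement and split cases are essentially bookkeeping. For~\eqref{eq-covd}, with rule $(q,s\cup\{i\})\xrightarrow{-\vec e_i}(q_1,s\cup\{i\})$, the coordinate $i$ always lies in the tracking set $s\cup\{i\}$, so $(s\cup\{i\})\cdot(\vec v+\vec e_i) = (s\cup\{i\})\cdot\vec v + \vec e_i$ and the decrement rule $q\xrightarrow{-\vec e_i}q_1 \in T_u^\dagger$ applies to the inductive judgement. For~\eqref{eq-covs}, projection distributes over addition as $s\cdot(\vec v_1+\vec v_2)=s\cdot\vec v_1 + s\cdot\vec v_2$, so the two inductive judgements for $(q_1,s)$ and $(q_2,s)$ combine through the original split rule $q\to q_1+q_2\in T_s$.

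The crux lies in the increment case, using the rule $(q,s)\xrightarrow{\vec e_i}(q_1,s\cup\{i\})$ from~\eqref{eq-covi}. By induction I would obtain $\?B^\dagger,q_\ell\jdg_e q_1,(s\cup\{i\})\cdot(\vec v+\vec e_i)$. When $i\in s$ this already equals $s\cdot\vec v+\vec e_i$, and applying $q\xrightarrow{\vec e_i}q_1\in T_u^\dagger$ closes the case. When $i\notin s$, however, it equals $s\cdot\vec v+(\vec v(i)+1)\vec e_i$ while the target $s\cdot\vec v$ has $0$ on coordinate $i$. This is the single point where expansions are essential: $\vec v(i)$ successive applications of the expansion rule in $\?B^\dagger$ shrink the $i$-th component from $\vec v(i)+1$ down to $1$, after which $q\xrightarrow{\vec e_i}q_1$ yields the desired $\?B^\dagger,q_\ell\jdg_e q,s\cdot\vec v$.

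The main obstacle I anticipate is pinning down the correct invariant: intuitively, $s$ records the coordinates whose topmost increment has already been simulated strictly above on the current branch of the $\?B^\ddagger$-derivation, and the projection $s\cdot\vec v$ hides the counter mass on the remaining coordinates, which in $\?B^\dagger$ is exactly what expansion would have produced. Once $s\cdot\vec v$ is identified as the right quantity to compare, the case analysis proceeds without further subtleties.
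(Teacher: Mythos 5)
Your proof is correct and follows essentially the same route as the paper's: induction on the $\?B^\ddagger$ root judgement, with the only delicate point being the increment case split on $i\in s$ versus $i\notin s$, where $\vec v(i)$ expansions absorb the excess on coordinate $i$ before applying $q\xrightarrow{\vec e_i}q_1$ (the paper phrases this via the decomposition $\vec v=\vec w+n\vec e_i$ with $\vec w(i)=0$, which is the same computation). No gaps.
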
%
\ifshort\relax\else%
\begin{proof}%
  We prove the claim by induction on the root judgement in
$\?B^\ddagger$.  For the base case, $\?B^\ddagger,(q_\ell,[d])\jdg
(q_\ell,[d]),\vec 0$ indeed matches $\?B^\dagger,q_\ell\jdg_e
q_\ell,[d]\cdot\vec 0$.

For the induction step, first assume that the judgement stems
from~\eqref{eq-covi} and $\?B^\ddagger,(q_\ell,[d])\jdg
(q_1,s\cup\{i\}),\vec v+\vec e_i$.  By induction hypothesis,
$\?B^\dagger,q_\ell\jdg_e q_1,(s\cup\{i\})\cdot(\vec v+\vec e_i)$.
Two cases arise depending on whether $i\in s$.  First suppose $i\in
s$: then $(s\cup\{i\})\cdot(\vec v+\vec e_i)=s\cdot\vec v+\vec e_i$
and $q\xrightarrow{\vec e_i}q_1$ yields $\?B^\dagger,q_\ell\jdg_e
q,s\cdot\vec v$ as desired.  Assuming $i\not\in s$, let us decompose
$\vec v$ as $\vec w+n\vec e_i$ where $\vec w(i)=0$ and $n\geq 0$: then
$(s\cup\{i\})\cdot(\vec v+\vec e_i)=s\cdot\vec w+(n+1)\vec e_i$.  We
apply $n$ expansions on coordinate $i$ to show
$\?B^\dagger,q_\ell\jdg_e q_1,s\cdot\vec w+\vec e_i$.  Applying
$q\xrightarrow{\vec e_i}q_1$ then yields $\?B^\dagger,q_\ell\jdg_e
q,s\cdot\vec w$, where $s\cdot\vec w=s\cdot\vec v$ as desired.

Assume now that~\eqref{eq-covd} was applied to
$\?B^\ddagger,(q_\ell,[d])\jdg(q_1,s\cup\{i\}),\vec v$.  By
induction hypothesis, $\?B^\dagger,q_\ell\jdg_e
q_1,(s\cup\{i\})\cdot\vec v$, and $q\xrightarrow{-\vec e_i}q_1$ yields
$\?B^\dagger,q_\ell\jdg_e q,(s\cup\{i\})\cdot\vec v+\vec e_i$, where
$(s\cup\{i\})\cdot\vec v+\vec e_i=(s\cup\{i\})\cdot(\vec v+\vec e_i)$
as desired.

Finally, assume that~\eqref{eq-covs} was applied to
$\?B^\ddagger,(q_\ell,[d])\jdg (q_1,s),\vec v_1$ and to
$\?B^\ddagger,(q_\ell,[d])\jdg(q_2,s),\vec v_2$.  By induction
hypothesis, $\?B^\dagger,q_\ell\jdg_e q_1,s\cdot\vec v_1$ and
$\?B^\dagger,q_\ell\jdg_e q_2,s\cdot\vec v_2$, from which $q\to
q_1+q_2$ yields $\?B^\dagger,q_\ell\jdg_e q,(s\cdot\vec
v_1)+(s\cdot\vec v_2)$, where $(s\cdot\vec v_1)+(s\cdot\vec
v_2)=s\cdot(\vec v_1+\vec v_2)$ as desired.
\end{proof}%
\fi%

\begin{proof}[Proof of \autoref{prop-exp-cov}]
If $\?B,q_\ell\jdg_e q_r,\vec 0$, then by~\eqref{eq-topmost},
$\?B^\dagger,q_\ell\jdg_e^\emptyset q_r,\vec 0$, thus
by \autoref{cl-exp-cov1}, there exists $\vec v$ such that
$\?B^\ddagger,(q_\ell,[d])\jdg (q_r,\emptyset),\vec v$, i.e.\ we can
cover $(q_r,\emptyset)$ in $\?B^\ddagger$.

Conversely, if $\?B^\ddagger,(q_\ell,[d])\jdg (q_r,\emptyset),\vec
v$, then by \autoref{cl-exp-cov2}, $\?B^\dagger,q_\ell\jdg_e
q_r,\emptyset\cdot\vec v$ where $\emptyset\cdot\vec v=\vec 0$.
Therefore $\?B,q_\ell\jdg_e q_r,\vec 0$ in the original BVASS $\?B$.
\end{proof}

\begin{theorem}\label{th-upb}
  Provability in $\Rel$ is in \textsc{2-ExpTime}.
\end{theorem}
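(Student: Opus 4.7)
The plan is to chain together the two reductions already established in this section and appeal to the coverability bound from Corollary \ref{cor-bvass}. Given an input implicational formula $F$ of size $N$, I would first apply Proposition \ref{prop-rel-bvass} to produce in logarithmic space an equivalent instance $\tup{\?B_F, F, q_\ell}$ of expansive reachability in an ordinary BVASS whose state space $Q_F$, dimension $d = |S|$, and transition sets $T_u, T_s$ are all polynomial in $N$. I would then apply Proposition \ref{prop-exp-cov} to that instance, producing in polynomial space an equivalent BVASS coverability instance $\tup{\?B^\ddagger, (F, \emptyset), (q_\ell, [d])}$: the dimension of $\?B^\ddagger$ is unchanged, its transitions are those of $\?B^\dagger$ parametrised uniformly by subsets $s \subseteq [d]$ (so that $\tup{d, T_u^\ddagger}$ still admits a polynomial-size description), while the actual state space $Q^\ddagger = Q^\dagger \times 2^{[d]}$ has size $2^{O(N)}$.

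The last step is to invoke Corollary \ref{cor-bvass} on this coverability instance. With $n = \text{poly}(N)$ measuring $\tup{d, T_u^\ddagger}$ and $\log|Q^\ddagger| = O(N)$, the quantity $n \cdot \log(n \cdot \log|Q^\ddagger|)$ remains polynomial in $N$, so the overall deterministic running time is bounded by $2^{2^{\text{poly}(N)}}$, which is in \textsc{2-ExpTime}. The only point that deserves attention is that the exponential state-space blow-up induced by Proposition \ref{prop-exp-cov} must not push the final complexity up by a further exponential; this is absorbed precisely by the $\log\log|Q|$ dependency inside the double exponential of the corollary, which is exactly the regime that Corollary \ref{cor-bvass} was crafted to handle. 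Beyond that observation, the theorem is just the composition of the two previous reductions with the cited coverability bound.
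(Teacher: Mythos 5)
Your proposal is correct and follows essentially the same route as the paper: compose Proposition~\ref{prop-rel-bvass} with Proposition~\ref{prop-exp-cov} to obtain a coverability instance of polynomial dimension but with $2^{p(|F|)}$ states, then invoke Corollary~\ref{cor-bvass}, whose bound depends only logarithmically (inside the inner exponent) on $|Q|$. Your remark that $T_u^\ddagger$ admits a succinct uniform description is the same point the paper makes by noting that the coverability check can be run on-the-fly from $F$ without explicitly constructing $\?B_F^\ddagger$.
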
\ifshort\vspace*{-1em}\fi
\begin{proof}
  By~\autoref{prop-rel-bvass} and \autoref{prop-exp-cov}, from a
  provability instance $\tup{F}$, we can reduce to a coverability
  instance $\tup{\?B_F^\ddagger,(q_r,\emptyset),(q_\ell,[|F|])}$ where
  $\?B_F^\ddagger$ has dimension $|F|$ and a number of states in
  $2^{p(|F|)}$ for a polynomial $p$.  By \autoref{cor-bvass},
  this coverability instance can be solved in double exponential time
  in $|F|$.  Note that the coverability check can be performed
  on-the-fly from $F$ to avoid the explicit construction of
  $\?B_F^\ddagger$.
\end{proof}

\section{Lower Bound}\label{sec-lowb}
In this section, we exhibit a reduction from BVASS coverability to
$\Rel$ provability, thereby showing its \textsc{2-ExpTime}-hardness.

Previous reductions from counter machines to substructural logics
in~\citep{lincoln92,urquhart99,lazic14} actually reduce to provability
in the logic extended with a \emph{theory} encoding the rules of the
system, which is then reduced to the basic logic.  This last step
relies in an essential way on the presence of exponential or additive
connectives to `dispose' of unused rules.

Having neither exponential nor additive connectives at our disposal,
we introduce in \autoref{sub-compr} a \emph{comprehensive} variant of
the expansive reachability problem, where every rule should be
employed at least once in the deduction.  We further avoid the use of
a theory and define in \autoref{sub-focus} a \emph{focusing} calculus
for $\Rel$, from which the correctness of the reduction given in
\autoref{sub-hard} will be facilitated.

\subsection{Comprehensive \detitle{Reachability}}\label{sub-compr}
Given a BVASS $\?B=\tup{Q,d,T_u,T_s}$ with expansive semantics and two
states $q_r$ and $q_\ell$, the \emph{comprehensive reachability}
problem asks whether there exists a deduction tree of $\?B$ with root
label $(q_r,\vec 0)$ and leaves label $(q_\ell,\vec 0)$, such that
every rule in $T_u\cup T_s$ is used at least once.  In terms of root
judgements, it asks whether $\?B,T_u\cup T_s,q_\ell\jdg_e q_r,\vec 0$.
We show that BVASS coverability can be reduced to comprehensive
expansive reachability, hence by \autoref{fc-bvass}:
\begin{proposition}\label{prop-compr}
  Comprehensive reachability in expansive ordinary BVASS is
  \textsc{2-ExpTime}-hard.
\end{proposition}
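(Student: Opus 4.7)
The plan is to reduce from BVASS coverability, which is \textsc{2-ExpTime}-hard by \autoref{fc-bvass}. Given a coverability instance $\tup{\?B, q_r, q_\ell}$ with $\?B = \tup{Q, d, T_u, T_s}$, I would construct in logarithmic space an expansive ordinary BVASS $\?B' = \tup{Q', d, T_u', T_s'}$ together with fresh states $q_r'$ and $q_\ell'$ so that $\?B, q_\ell \jdg q_r, \vec v$ for some $\vec v \in \+N^d$ if and only if $\?B', T_u' \cup T_s', q_\ell' \jdg_e q_r', \vec 0$.

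Three layers will be added on top of $\?B$. The \emph{bootstrap layer} at the new root converts coverability into reachability: self-loops $q_r' \xrightarrow{\vec e_i} q_r'$ for each coordinate $i$, together with a vector-preserving transition through an auxiliary state $p$ (the pair $q_r' \xrightarrow{\vec e_1} p$ and $p \xrightarrow{-\vec e_1} q_r$), allow any $(q_r, \vec v)$ to appear below $(q_r', \vec 0)$. The \emph{leaf layer} adds a symmetric vector-preserving transition from $q_\ell$ (assumed WLOG to have no outgoing rules, as in \autoref{sub-exp-cov}) to the new $q_\ell'$. The \emph{rule-exerciser layer}, attached at the very top via a cascade of split rules $q_r' \to q_r' + q_d^r$ indexed by the rules $r$ of $T_u' \cup T_s'$ that still need to be fired, will contain one small subtree per rule; each subtree routes (through zero-net gadgets of the same kind as the bootstrap's transition) to the source state of $r$, fires $r$ in a configuration with exactly the tokens required, and finally feeds the result into a \emph{universal disposer}---a state $q_D$ with decrement self-loops on every coordinate and a capping pair of rules transitioning to $q_\ell'$---which absorbs any vector and terminates at $(q_\ell',\vec 0)$.

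Correctness in the forward direction would turn any coverability witness of $\?B$ for $(q_r,\vec v)$ into a comprehensive witness of $\?B'$ by fanning out the dispose branches at the top, bootstrapping to a vector $\vec{v}'$ large enough both for the original witness and for the disposers, and running the witness unchanged in the main branch. In the backward direction, the subtree of a comprehensive witness of $\?B'$ that sits below the main split is a coverability witness of $\?B$, since the added rules cannot move resources between the main branch and the dispose branches.

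The main obstacle will be the meta-level requirement that every \emph{auxiliary} rule introduced by the construction is itself also used at least once in the comprehensive witness. I would handle this by (a) designing the dispose subtree for each rule $r$ so that its own routing rules sit on a unique path that must be traversed once the subtree is entered; (b) for each coordinate $i$, including at least one dispose branch whose disposer decrements coordinate $i$, thereby forcing the bootstrap self-loop on $i$ to fire to supply the token; and (c) ensuring the vector-preserving transitions lie on the unique entry path to $q_r$ and the unique exit path from $q_\ell$. An accounting should then give $|\?B'| = O((|Q|+d)\cdot(|T_u|+|T_s|))$, so that the reduction runs in logarithmic space, which suffices for \textsc{2-ExpTime}-hardness.
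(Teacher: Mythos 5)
Your overall strategy (reduce from BVASS coverability, convert covering into reaching $\vec 0$ via increments near the root, and force every rule to fire once in a dedicated portion of the witness) matches the paper's, but the mechanism you propose for exercising the rules has a genuine soundness gap. Since comprehensiveness concerns the \emph{actual} rules of $\?B$, each dispose subtree must fire rule $r$ from its genuine source state to its genuine target state $q_1\in Q$; to then ``feed the result into'' the disposer you must add a unary rule leaving $q_1$ towards $q_D$. In your construction this exit rule is unguarded---$\?B'$ keeps dimension $d$, and your ``zero-net gadgets'' only touch existing coordinates, which can be nonzero during the main simulation---so the same exit is available to the main branch. A branch of the main simulation sitting at $(q_1,\vec w)$ can therefore abandon the simulation, dump $\vec w$ into $q_D$, and terminate at $(q_\ell',\vec 0)$ without ever reaching $q_\ell$. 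Concretely, if $\?B$ has a single split $q_r\to q_a+q_b$ and $q_\ell$ is unreachable from $q_a$, then $\?B$ is not coverable, yet $\?B'$ admits a comprehensive witness in which every branch escapes through the disposer. This is precisely the problem the paper's construction solves by adding one fresh coordinate $\vec e_t$ per rule $t$ (plus one extra coordinate), incrementing $\vec e_t$ only at the entry of the gadget for $t$ and guarding every exit edge out of an original state by a decrement of $\vec e_t$; your reduction needs an analogous guard, which in particular changes the dimension to $d+O(|T_u|+|T_s|)$.

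A second, smaller gap lies in your backward direction: the subtree below $(q_r,\vec v)$ in a comprehensive \emph{expansive} witness may use expansion rules at arbitrary states, so it is not literally a deduction tree of $\?B$, and expansions cannot in general be eliminated without changing the root vector in an uncontrolled way. The paper handles this by adding increases $q\xrightarrow{\vec e_i}q$ at \emph{every} state of $\?B$ (not only at the root), so that each expansion is an instance of an existing rule and all increases can be permuted up to the root; with your root-only self-loops the expansions occurring deep in the main subtree cannot be absorbed. Your per-branch disposal architecture could plausibly be made to work once both issues are repaired, but as written the claim that ``the added rules cannot move resources between the main branch and the dispose branches'' does not establish soundness.
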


\subsubsection{Increasing \detitle{Reachability}.\nopunct}
Let us consider an instance $\tup{\?B,q_r,q_\ell}$ of the coverability
problem in an ordinary BVASS $\?B=\tup{Q,d,T_u,T_s}$.  As a first
step, we construct an ordinary BVASS
$\?B^\dagger\eqdef\tup{Q,d,T^\dagger_u,T_s}$ with additional
\emph{increases} $q\xrightarrow{\vec e_i}q$ for every $q$ in $Q$ and
$0<i\leq d$.  We claim that coverability in $\?B$ is equivalent to
reachability in $\?B^\dagger$:
\begin{claim}\label{cl-cov-inc}
  There exists $\vec v$ in $\+N^d$ such that $\?B,q_\ell\jdg
  q_r,\vec v$ \ifshort iff\else if and only if\fi\ $\?B^\dagger,q_\ell\jdg q_r,\vec 0$.
\end{claim}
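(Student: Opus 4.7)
The plan is to verify each direction of the equivalence separately, by a direct structural manipulation of deduction trees. The key observation is that each new rule $q\xrightarrow{\vec e_i}q$ in $\?B^\dagger$ acts, in the top-down root judgement system, as a device that lets us freely increase the counter at the root of any sub-derivation (equivalently, going bottom-up: freely consume counter units), so it is tailor-made for ``absorbing'' the surplus $\vec v$ produced by a coverability witness.

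For the forward direction, suppose $\?B,q_\ell\jdg q_r,\vec v$ for some $\vec v\in\+N^d$. Since every rule of $\?B$ is also a rule of $\?B^\dagger$, we immediately have $\?B^\dagger,q_\ell\jdg q_r,\vec v$. Now repeatedly apply the judgement rule induced by $q_r\xrightarrow{\vec e_i}q_r$: each application replaces $\?B^\dagger,q_\ell\jdg q_r,\vec v+\vec e_i$ with $\?B^\dagger,q_\ell\jdg q_r,\vec v$, so after $\sum_i\vec v(i)$ applications we obtain $\?B^\dagger,q_\ell\jdg q_r,\vec 0$.

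For the reverse direction, I would prove by induction on the root judgement derivation the slightly strengthened statement: if $\?B^\dagger,q_\ell\jdg q,\vec v$, then there exists $\vec v'\geq\vec v$ such that $\?B,q_\ell\jdg q,\vec v'$. The base case and the cases of increment, decrement, and split rules of $\?B$ are routine; one applies the induction hypothesis to the premise(s), replays the same rule in $\?B$, and the inequality $\vec v'\geq\vec v$ is preserved componentwise. The only new case is that of an increase rule $q\xrightarrow{\vec e_i}q$ of $T_u^\dagger\setminus T_u$, which produces $\?B^\dagger,q_\ell\jdg q,\vec v$ from $\?B^\dagger,q_\ell\jdg q,\vec v+\vec e_i$: here the induction hypothesis yields $\vec v'\geq\vec v+\vec e_i$ with $\?B,q_\ell\jdg q,\vec v'$, and we simply keep the same $\vec v'$, since $\vec v'\geq\vec v+\vec e_i\geq\vec v$. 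Specialising this invariant to the derivation of $\?B^\dagger,q_\ell\jdg q_r,\vec 0$ yields some $\vec v\geq\vec 0$ with $\?B,q_\ell\jdg q_r,\vec v$, which is exactly the desired coverability witness.

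No step presents any real obstacle; the only thing to be slightly careful about is the orientation of rules in the top-down root judgement system (so that ``increase'' rules actually decrease the counter when read bottom-up). The argument is essentially bookkeeping, and the proof is short.
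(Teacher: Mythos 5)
Your proof is correct, and the forward direction coincides with the paper's. For the converse direction, however, you take a genuinely different route. The paper argues on the shape of the deduction tree itself: it observes that an increase $q\xrightarrow{\vec e_i}q$ sitting below an increment, decrement, or split can be permuted locally to sit above it, so that all increases can be pushed up to the root; stripping them off leaves a deduction tree of $\?B$ rooted at $(q_r,\vec v)$. You instead prove by induction on the root judgement derivation the monotonicity invariant that $\?B^\dagger,q_\ell\jdg q,\vec v$ implies $\?B,q_\ell\jdg q,\vec v'$ for some $\vec v'\geq\vec v$, absorbing each increase into the slack of the inequality. Your case analysis is sound (in the increment case the premise's vector dominates $\vec v+\vec e_i$, so it has a unit of coordinate $i$ to spend and the rule can be replayed with the conclusion still dominating $\vec v$), and the invariant instantiated at $(q_r,\vec 0)$ gives exactly the coverability witness. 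The paper's permutation argument is more geometric and is only sketched; your inductive invariant is more mechanical but fully rigorous as stated, and it mirrors the style of \autoref{cl-exp-cov1} elsewhere in the paper, so it fits naturally. Both arguments buy the same thing at essentially the same cost; yours avoids having to justify the local commutation of rules in a tree, which is the one step the paper leaves implicit.
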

\begin{proof}[Proof Sketch]
Clearly, if $\?B,q_\ell\jdg q_r,\vec v$ for some $\vec v$ in
$\+N^d$, then $\?B^\dagger,q_\ell\jdg q_r,\vec v$, and using
increases in $q_r$ shows $\?B^\dagger,q_\ell\jdg q_r,\vec 0$.
Conversely, if there is a reachability witness for
$\tup{\?B^\dagger,q_r,q_\ell}$, then we can assume that increases
$q\xrightarrow{\vec e_i}q$ occur as close to the root as possible.  As
increases occurring right below increments, decrements, or splits can
be permuted locally to occur right above, such a reachability witness
has all its increases at the root.  The deduction tree below those
increases is labelled $(q_r,\vec v)$ for some $\vec v$ in $\+N^d$ and is
also a deduction tree of $\?B$.
\end{proof}

\subsubsection{Comprehensive \detitle{Root Rules}.\nopunct}
The second step of the reduction from BVASS coverability builds an
ordinary BVASS $\?B^\ddagger\eqdef\tup{Q^\ddagger,d+1+|T^\dagger_u\cup
  T_s|,T^\ddagger_u,T_s}$ where $Q^\ddagger\eqdef
Q\uplus\{q^\ddagger,q_r^\ddagger\}\uplus\{q_t\mid t\in T^\dagger_u\cup
T_s\}$.  It features an additional set of unary `root' rules---depicted
in \autoref{fig-compr}---designed to allow any rule in
$T^\ddagger_u\cup T_s$ to be employed in a reachability witness.

\begin{figure}[tbp]
\centering
\begin{tikzpicture}[auto,node distance=.65cm]
  \node[state](qrdd){$q^\ddagger_r$};
  \node[state,right=of qrdd](qdd){$q^\ddagger$};
  \node[state,right=.8cm of qdd](qr){$q_r$};
  \path (qrdd) edge node{$\vec e_{d+1}$} (qdd)
    (qdd) edge node{$-\vec e_{d+1}$} (qr);
  \node[state,right=1.3cm of qr](qi0){$q^\ddagger$};
  \node[state,right=of qi0,label=above:{$t$:}](qi1){$q$};
  \node[state,right=of qi1](qi2){$q_1$};
  \node[state,right=of qi2](qi3){$q_t$};
  \node[state,right=of qi3](qi4){$q^\ddagger$};
  \path (qi0) edge node{$\vec e_t$} (qi1)
    (qi1) edge node{$\vec e_i$} (qi2)
    (qi2) edge node{$-\vec e_i$} (qi3)
    (qi3) edge node{$-\vec e_t$} (qi4);
  \node[dotted,draw=black!40,rectangle,rounded
  corners=8pt,anchor=above left,minimum
  width=2.2cm,minimum height=1.2cm,above right=-.7cm and -.7cm of qi1]{};
  \node[state,below=.6cm of qrdd](qd0){$q^\ddagger$};
  \node[state,right=of qd0](qd1){$q_t$};
  \node[state,right=of qd1,label=above:{$t$:}](qd2){$q$};
  \node[state,right=of qd2](qd3){$q_1$};
  \node[state,right=of qd3](qd4){$q^\ddagger$};
  \path (qd0) edge node{$\vec e_t$} (qd1)
    (qd1) edge node{$\vec e_i$} (qd2)
    (qd2) edge node{$-\vec e_i$} (qd3)
    (qd3) edge node{$-\vec e_t$} (qd4);
  \node[dotted,draw=black!40,rectangle,rounded
  corners=8pt,anchor=above left,minimum
  width=2.2cm,minimum height=1.2cm,above right=-.7cm and -.7cm of qd2]{};
  \node[state,below=.26cm of qd4](qs0){$q^\ddagger$};
  \node[state,right=of qs0](qs1){$q_t$};
  \node[state,right=of qs1,label=right:{$+$},label=above:{$t$:}](qs2){$q$};
  \draw[draw=black!40] (qs2) ++(.3,-.3) arc (-45:45:.42);
  \node[state,above right=of qs2](qs3){$q_1$};
  \node[state,below right=of qs2](qs4){$q_2$};
  \node[state,below right=of qs3](qs5){$q^\ddagger$};
  \path (qs0) edge node{$\vec e_t$} (qs1)
    (qs1) edge node {$\vec e_t$} (qs2)
    (qs2) edge (qs3)
    (qs2) edge (qs4)
    (qs3) edge node {$-\vec e_t$} (qs5)
    (qs4) edge[swap] node {$-\vec e_t$} (qs5);
  \node[dotted,draw=black!40,rectangle,rounded
  corners=8pt,anchor=above left,minimum
  width=1.8cm,minimum height=2.7cm,right=-.8cm of qs2]{};
\end{tikzpicture}
\caption{\label{fig-compr}The rules of $\?B^\ddagger$ in the proof of
  \autoref{prop-compr}, where $t$ ranges over $T^\dagger_u\cup T_s$.}
\end{figure}

The idea is to introduce a new state $q^\ddagger$ and a new coordinate
for each rule $t$ in $T^\dagger_u\cup T_s$.  Starting from
$q^\ddagger$, $\?B^\ddagger$ can simulate any rule $t$ from
$T^\dagger_u\cup T_s$ by first incrementing by the corresponding unit
vector $\vec e_t$, then applying the rule, and finally decrementing by
$\vec e_t$ to return to $q^\ddagger$.  This ensures that, if
$\?B^\ddagger,T,q_\ell\jdg_e q^\ddagger,\vec v$ for some $\vec v$ in
$\+N^d$, then $\?B^\ddagger,T',q_\ell\jdg_e q^\ddagger,\vec v$ where
$T^\dagger_u\cup T_s\subseteq T'$.  The additional states and rules
from $q^\ddagger_r$ and to $q_r$ then show that:
\begin{claim}\label{cl-compr1}
  If $\?B^\dagger,T,q_\ell\jdg q_r,\vec 0$ for some $T\subseteq
  T^\dagger_u\cup T_s$, then $\?B^\ddagger,T^\ddagger_u\cup
  T_s,q_\ell\jdg_e q^\ddagger_r,\vec 0$.
\end{claim}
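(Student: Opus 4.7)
The plan is to transform the given reachability witness $\pi$ for $\?B^\dagger,T,q_\ell\jdg q_r,\vec 0$ into a comprehensive expansive witness in $\?B^\ddagger$ by building a ``gadget spine'' from the new root $(q^\ddagger_r,\vec 0)$ and grafting a copy of $\pi$ below each of its leaves. At the top, I would apply $q^\ddagger_r\xrightarrow{\vec e_{d+1}}q^\ddagger$ to reach $(q^\ddagger,\vec e_{d+1})$; the carrier token on coordinate $d+1$ will be preserved throughout the spine and eventually consumed, at each spine leaf, by $q^\ddagger\xrightarrow{-\vec e_{d+1}}q_r$ to transition to $(q_r,\vec 0)$ before the corresponding copy of $\pi$ is attached.

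The key observation, verified by direct inspection of \autoref{fig-compr}, is that every gadget read top-down has zero net effect on the counters: an increment or decrement gadget for $t\in T^\dagger_u$ takes $(q^\ddagger,\vec v)$ to a single $(q^\ddagger,\vec v)$, exercising all four of its unary rules (three wrapping rules plus the original rule $t$). A split gadget for $t\in T_s$, however, naively splits the payload $\vec v$ between its two children, leaving one of them at $(q^\ddagger,\vec 0)$; this is the main obstacle, because from $(q^\ddagger,\vec 0)$ no derivation can reach $q_\ell$---in particular $q^\ddagger\xrightarrow{-\vec e_{d+1}}q_r$ cannot fire for lack of $\vec e_{d+1}$. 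I would resolve this using expansivity: inserting a single expansion on coordinate $d+1$ between the two $\vec e_t$-increments of the split gadget duplicates the carrier token, so the subsequent split can be chosen to deliver $\vec e_{d+1}$ to each child, and both exits land at $(q^\ddagger,\vec e_{d+1})$.

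With this invariant secured, the spine is assembled by chaining a gadget for each $t\in T^\dagger_u$ linearly below $(q^\ddagger,\vec e_{d+1})$, then inserting each split gadget for $t\in T_s$ in turn below one of the $(q^\ddagger,\vec e_{d+1})$ leaves accumulated so far. The resulting spine uses every rule of $T^\ddagger_u$ except $q^\ddagger\xrightarrow{-\vec e_{d+1}}q_r$ and every rule of $T_s$, and terminates in $1+|T_s|$ leaves, all of the form $(q^\ddagger,\vec e_{d+1})$. At each such leaf I would apply $q^\ddagger\xrightarrow{-\vec e_{d+1}}q_r$ to reach $(q_r,\vec 0)$ and graft a fresh copy of $\pi$; since $\pi$ uses only states in $Q$ and rules in $T\subseteq T^\dagger_u\cup T_s$, all preserved in $\?B^\ddagger$, this closes the tree with leaves $(q_\ell,\vec 0)$ and yields the comprehensive expansive witness for $\?B^\ddagger,T^\ddagger_u\cup T_s,q_\ell\jdg_e q^\ddagger_r,\vec 0$.
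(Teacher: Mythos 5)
Your proposal is correct and follows essentially the same route as the paper, which proves \autoref{cl-compr1} by the very construction of \autoref{fig-compr}: a chain of zero-net-effect gadgets from $(q^\ddagger,\vec e_{d+1})$ exercising every rule, closed off at each $q^\ddagger\xrightarrow{-\vec e_{d+1}}q_r$ exit by a copy of the given $\?B^\dagger$-witness. The one point the paper leaves implicit --- that a split gadget strands one branch at $(q^\ddagger,\vec 0)$ unless the carrier token on coordinate $d+1$ is first duplicated by an expansion, after which both branches can be closed off with their own copies of the witness --- is exactly the detail you supply, and you supply it correctly.
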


Conversely, assume that $\?B^\ddagger,q_\ell\jdg_e q^\ddagger_r,\vec
0$.  This entails $\?B^\ddagger,q_\ell\jdg_e q^\ddagger,\vec
e_{d+1}$.

First assume that no decrement by $\vec e_t$ for any $t$ in
$T^\dagger_u\cup T_s$ is ever used in the corresponding expansive
reachability witness.\ifshort\pagebreak\fi\ Then also no increment by
$\vec e_t$ occurs, and the only rule applicable at this point is
$q^\ddagger\xrightarrow{-\vec e_{d+1}}q_r$.  This yields a node $n$
labelled $(q_r,\vec 0)$, and a deduction subtree rooted by $n$ where
only rules of $\?B^\dagger$ and expansions are applied.  Because any
expansion can be simulated in $\?B^\dagger$ using an increase, this
yields $\?B^\dagger,q_\ell\jdg q_r,\vec 0$.

Assume now the opposite: there is at least one occurrence of a
decrement by $\vec e_t$ in the expansive reachability witness.
Consider the bottommost such occurrence along some branch, necessarily
yielding a node with $q^\ddagger$ as state label.  Then in the same
way, below this bottommost occurrence, no increment by $\vec e_t$
occurs, and the only rule applicable at this point is
$q^\ddagger\xrightarrow{-\vec e_{d+1}}q_r$, which yields a node $n$
labelled $(q_r,\vec v)$ for some $\vec v$ in $\+N^d$.  The deduction
subtree rooted by $n$ only uses rules of $\?B^\dagger$ and expansions,
thus as in the previous case $\?B^\dagger,q_\ell\jdg q_r,\vec v$.
Using increases in $q_r$ then shows $\?B^\dagger,q_\ell\jdg q_r,\vec
0$.  Therefore, in all cases:
\begin{claim}\label{cl-compr2}
  If $\?B^\ddagger,T^\ddagger_u\cup T_s,q_\ell\jdg_e
  q_r^\ddagger,\vec 0$, then $\?B^\dagger,q_\ell\jdg q_r,\vec 0$.
\end{claim}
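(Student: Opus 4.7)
The plan is to drop the strengthened $T$-constraint, since it only makes the hypothesis easier to satisfy, and instead show that any deduction tree $\pi$ witnessing $\?B^\ddagger,q_\ell\jdg_e q^\ddagger_r,\vec 0$ can be converted into one witnessing $\?B^\dagger,q_\ell\jdg q_r,\vec 0$. The argument is entirely by structural analysis of $\pi$, exploiting the rigid shape imposed by the gadget coordinates $\vec e_{d+1}$ and $\vec e_t$.

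First I would analyse $\pi$ near the root. Since $q^\ddagger_r$ has the unique outgoing rule $q^\ddagger_r\xrightarrow{\vec e_{d+1}}q^\ddagger$, $\pi$'s root is followed by a subtree $\pi'$ with label $(q^\ddagger,\vec e_{d+1})$. By inspection of Figure~\ref{fig-compr}, coordinate $d+1$ is decremented only by $q^\ddagger\xrightarrow{-\vec e_{d+1}}q_r$, so the balance constraint at the leaves $(q_\ell,\vec 0)$ forces this rule to appear exactly once along each branch of $\pi'$.

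Next I would case-split on whether any gadget-closing decrement $-\vec e_t$ occurs in $\pi'$. In the negative case, a balance argument on coordinate $t$ (noting that increments, expansions, and splits can only maintain or propagate non-negative values on that coordinate, never cancel them) shows that no $+\vec e_t$-rule and no expansion on coordinate $t$ are used either. Hence the only rule applicable at the root of $\pi'$ is $q^\ddagger\xrightarrow{-\vec e_{d+1}}q_r$, exposing a node $n$ labelled $(q_r,\vec 0)$. In the positive case, I pick a branch of $\pi'$ containing such a decrement and take its bottommost occurrence: directly below it the state is $q^\ddagger$, no further $\vec e_t$-rule appears on that branch, and the same reasoning yields a node $n$ labelled $(q_r,\vec v)$ for some $\vec v\in\+N^d$.

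In either case, the subtree of $\pi'$ rooted at $n$ uses only rules from $\?B^\dagger$ and expansions. Each expansion $q,\vec u+\vec e_i\to q,\vec u+2\vec e_i$ is replaced by a use of the increase $q\xrightarrow{\vec e_i}q$ available in $T^\dagger_u$, producing a genuine deduction tree of $\?B^\dagger$ witnessing $\?B^\dagger,q_\ell\jdg q_r,\vec v$. Prepending a chain of increases at $q_r$ to bring $\vec 0$ up to $\vec v$ then yields $\?B^\dagger,q_\ell\jdg q_r,\vec 0$, as required. The main delicate point is the balance argument on coordinate $t$: I must verify carefully that expansions, which can strictly increase a coordinate, cannot compensate for the absence of a $-\vec e_t$ rule, and that splits distribute any positive value to at least one child whose branch must then contain a cancelling decrement—contradicting the hypothesis of the negative case.
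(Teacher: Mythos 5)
Your proof is correct and follows essentially the same route as the paper's: analyse the forced $\vec e_{d+1}$ step at the root, case-split on whether any $-\vec e_t$ decrement occurs (taking the bottommost occurrence in the positive case), extract a subtree rooted at $(q_r,\vec v)$ using only $\?B^\dagger$-rules and expansions, simulate expansions by increases, and finish with increases at $q_r$. The extra care you flag about the balance argument on the $\vec e_t$ coordinates (splits and expansions cannot cancel a positive value) is a reasonable elaboration of a step the paper asserts without comment, not a different argument.
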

By Claims~\ref{cl-cov-inc}, \ref{cl-compr1}, and~\ref{cl-compr2},
$\?B^\ddagger,T^\ddagger_u\cup T_s,q_\ell\jdg_e q_r^\ddagger,\vec 0$
if and only if there exists $\vec v$ in $\+N^d$ such that
$\?B,q_\ell\jdg q_r,\vec v$, thereby showing the correctness of our
reduction.

\subsection{Focusing \detitle{Proofs} in $\Rel$}\label{sub-focus}
We enforce a particular proof policy in our simulation of BVASSs in
$\Rel$, which is inspired by the \emph{focusing proof}
techniques~\citep{andreoli92} employed to reduce non-determinism
during proof search in sequent calculi.  With only implication at our
disposal, we find ourselves in a `negative fragment', where focusing
proofs have a very simple calculus $F\Rel$.  This is equivalent to
restricting oneself to long normal forms in the associated
$\lambda$-calculus.

A \emph{focusing sequent} is of one of the two forms
`$\Gamma,[A]\fprv B$' or `$\Gamma\fprv A$' where `$[A]$' is called a
\emph{focused} formula.  We let $\Gamma,\Delta,\dots$ denote as before
multisets of implicational formul\ae\ and $A,B,C$ implicational formul\ae.
Here are the rules of the focusing calculus $F\Rel$:
\begin{gather*}
    \tabrulelabelr[atomic identity]{fl-al}{}{[a]\fprv a}{atomic}\qquad
    \tabrulelabelr[focus]{fl-f}{\Gamma,[A]\fprv a}
    {\Gamma,A\fprv a}{focus}\qquad
    \tabrulelabelr[contraction]{fl-cn}{\Gamma, A, A\fprv a}{\Gamma,
    A\fprv a}{C^\mathit{f}}\\[1em]%
  \tabrulelabelr[left implication]{fl-impl}{\Gamma\fprv A\quad\Delta, [B]\fprv a}{\Gamma,\Delta,[A\imp B]\fprv a}{\imp_L^\mathit{f}}
  \qquad
  \tabrulelabelr[right implication]{fl-impr}{\Gamma, A\fprv B}{\Gamma\fprv A\imp B}{\imp_R^\mathit{f}}
\end{gather*}
Note that our focusing calculus $F\Rel$ gives the priority to right
implications \eqref{fl-impr} over the left
implications~\eqref{fl-impl}, focus~\eqref{fl-f}, and
contractions~\eqref{fl-cn}: the latter can only be applied to
sequents with atomic succedents $a$ in $\?A$.  A similar observation
is that a focusing sequent $\Gamma,[a]\fprv A$ is provable if and
only if $A=a$ is atomic and $\Gamma=\emptyset$ is the empty multiset,
since \eqref{fl-al} is the only rule yielding a sequent with a focused
atomic formula $[a]$.

\begin{restatable}[$F\Rel$ is sound and complete]{theorem}{thfocus}\label{th-focus}
  A sequent $\Gamma\vdash A$ is provable in $L\Rel$ if and only if the
  focusing sequent $\Gamma\fprv A$ is provable in $F\Rel$.
\end{restatable}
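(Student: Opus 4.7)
The statement has two directions: soundness ($F\Rel\Rightarrow L\Rel$) is routine, and completeness ($L\Rel\Rightarrow F\Rel$) is where the real work lies. For soundness, I would simply erase the focus brackets $[\cdot]$ and proceed by induction on the $F\Rel$ derivation: the atomic axiom \eqref{fl-al} followed by \eqref{fl-f} reproduces \eqref{rl-ax}, \eqref{fl-cn} is \eqref{rl-cn}, and the implication rules translate directly.

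For completeness, I would induct on $L\Rel$ proofs, supported by three auxiliary lemmas. First, an \emph{identity lemma}: $A\fprv A$ for every formula $A$, proved by induction on $A$---atoms are handled by \eqref{fl-al} and \eqref{fl-f}, while $A=A_1\imp A_2$ is handled by peeling off right-implications with \eqref{fl-impr} until the succedent is atomic, focusing on the principal hypothesis via \eqref{fl-f}, and closing with \eqref{fl-impl} using the inductive sub-identities. Second, \emph{invertibility of \eqref{fl-impr}}: if $\Gamma\fprv A\imp B$ is provable then $\Gamma,A\fprv B$ is provable, which is immediate since no other rule concludes a sequent with a non-atomic succedent. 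Third, a \emph{cut admissibility} lemma: if $\Gamma\fprv A$ and $\Delta,A\fprv B$ are derivable in $F\Rel$, so is $\Gamma,\Delta\fprv B$. With these in hand, the $L\Rel$ axiom case uses identity; \eqref{rl-cn} with a non-atomic succedent is handled by repeated invertibility to reach an atomic succedent, then \eqref{fl-cn}, then rebuilding with \eqref{fl-impr}; \eqref{rl-impr} is direct; and \eqref{rl-impl} uses cut twice, first cutting a derivation of $A\imp B,A\fprv B$ (a small generalization of identity) against the right premise $\Delta,B\fprv C$ to obtain $\Delta,A\imp B,A\fprv C$, then cutting against the left premise $\Gamma\fprv A$ to obtain $\Gamma,\Delta,A\imp B\fprv C$.

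The principal obstacle is cut admissibility itself. This calls for the usual nested induction, principal on the size of the cut formula and subordinate on the heights of the two derivations, with case analysis on how the cut formula becomes principal in the right premise---via \eqref{fl-f} followed by \eqref{fl-impl} or \eqref{fl-al}---and on its interactions with possible duplications by \eqref{fl-cn}. The purely implicational, negative character of $F\Rel$, with neither additives nor multiplicatives nor exponentials, keeps the case analysis compact; the only real subtlety is handling contractions above the cut, which is absorbed by the standard trick of permuting cut upward past \eqref{fl-cn} and reusing the induction hypothesis on the smaller subderivation. An alternative semantic route, hinted at in the paper's discussion of the $\lambda I$-calculus, would derive the theorem from normalization: $L\Rel$ provability corresponds to typability of $\lambda I$-terms, $F\Rel$ provability corresponds to the existence of a well-typed long normal form, and standard normalization bridges the two, transporting the technical burden from sequent calculus to term rewriting.
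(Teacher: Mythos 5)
Your overall architecture coincides with the paper's: soundness by erasing the focus annotations, completeness by induction on the $L\Rel$ proof using exactly the three auxiliary facts the paper isolates (invertibility of \eqref{fl-impr}, admissibility of identity in focused and unfocused form, and a cut-elimination argument), with the same treatment of \eqref{rl-ax}, \eqref{rl-cn}, \eqref{rl-impr}, and \eqref{rl-impl}. The genuine gap is inside your cut-admissibility lemma. You state cut with a single occurrence of the cut formula ($\Gamma\fprv A$ and $\Delta,A\fprv B$ give $\Gamma,\Delta\fprv B$) and propose to handle a contraction \eqref{fl-cn} on the cut formula in the right premise by ``permuting the cut upward and reusing the induction hypothesis on the smaller subderivation.'' That permutation produces \emph{two} cuts: the inner one is indeed against a smaller subderivation, but the outer one must then be applied to the \emph{result} of the inner cut, which is not a subderivation of anything and carries no height bound, so the induction on (size of cut formula, height of the right derivation) does not go through in precisely that case. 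This is the classical obstruction that contraction poses to cut elimination, and it is why the paper proves admissibility of a \emph{mix} rule instead (\autoref{cl-mix}): the right premise is $\delta,A^{n+1}\fprv a$ (or $\Delta,A^{n},[A]\fprv a$), so a contraction on the cut formula turns $\Delta,A^{n+2}\fprv a$ into $\Delta,A^{n+1}\fprv a$ and the mix applied to the premise is directly an instance of the induction hypothesis on a strictly smaller right subderivation.

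The repair is standard but must actually be made: adopt the multicut/mix formulation (or an equivalent strengthening of the induction). Two smaller remarks. First, the paper states its mix rules only for atomic succedents $a$, since invertibility of \eqref{fl-impr} forces any non-atomic succedent through that rule anyway; your cut with an arbitrary succedent $B$ is derivable from the atomic version plus invertibility, but proving it directly would multiply the cases. Second, your alternative route via normalization of $\lambda I$-terms and long normal forms is plausible in outline but is not the paper's argument and would need its own justification that long normal forms correspond exactly to $F\Rel$ proofs.
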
\noindent
We prove \autoref{th-focus} in \appref{sec-focus}, using the
admissibility of a suitable cut rule in $F\Rel$.

\subsection{From \detitle{Comprehensive Expansive Reachability} to
  {\boldmath $F\Rel$}}\label{sub-hard}
Let us consider a comprehensive expansive reachability instance
$\tup{\?B,q_r,q_\ell}$ where $\?B=\tup{Q,d,T_u,T_s}$.  We are going to
construct an implicational formula $F$ such that $\vdash F$ if and
only if $\?B,T_u\cup T_s,q_\ell\jdg_e q_r,\vec 0$.

We work for this on the set of atomic formul\ae\ $Q\uplus\{e_i\mid
0<i\leq d\}$, and associate to a root judgement $\?B,T,q_\ell\jdg_e
q,\vec v$ a focusing sequent 
\begin{equation}
  q_\ell,\Delta_T,\Gamma_{\vec v}\fprv q
\end{equation}
where $\Delta_T$ encodes the rules in $T\subseteq T_u\cup T_s$ and
$\Gamma_{\vec v}$ encodes $\vec v$: let $T=\{t_1,\dots,t_k\}$ and
$\vec v=c_1\vec e_1+\cdots+c_d\vec e_d$, then
\begin{align}
  \Delta_T&\eqdef \enc{t_1},\dots,\enc{t_k}\;,\\
  \Gamma_{\vec v}&\eqdef e_1^{c_1},\dots,e_d^{c_d}\;,
  \shortintertext{%
    where `$\enc{t}$' is the individual encoding of rule $t$ and
    `$A^c$' stands for $c$ repetitions of the formula $A$.  We use the
    following individual rule encodings:}
  \enc{q\xrightarrow{\vec e_i}q_1}&\eqdef (e_i\imp q_1)\imp q\;,\\
  \enc{q\xrightarrow{-\vec e_i}q_1}&\eqdef q_1\imp(e_i\imp q)\;,\\
  \enc{q\to q_1+q_2}&\eqdef q_1\imp(q_2\imp q)\;.
\end{align}
Then proof search in $F\Rel$ is easily seen to implement deductions in
$\?B$:
\begin{claim}[Completeness]\label{cl-exp-fl1}
  If $\?B,T,q_\ell\jdg_e q,\vec v$, then
  $q_\ell,\Delta_T,\Gamma_{\vec v}\fprv q$.
\end{claim}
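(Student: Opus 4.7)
The plan is to proceed by induction on the structure of the derivation $\?B,T,q_\ell\jdg_e q,\vec v$, treating each deduction rule of the refined root-judgement system in turn. A convenient observation is that the encoded sequent always has an atomic succedent $q \in Q$, so the focusing rules \eqref{fl-f}, \eqref{fl-impl}, and \eqref{fl-cn} are always available; conversely, an \eqref{fl-impr} step can be applied to push an atomic resource of the form $e_i$ into the antecedent whenever needed.

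The base case $\?B,\emptyset,q_\ell\jdg_e q_\ell,\vec 0$ reduces to $q_\ell\fprv q_\ell$ via \eqref{fl-f} followed by \eqref{fl-al}. For an \textbf{increment} step from $\?B,T,q_\ell\jdg_e q_1,\vec v+\vec e_i$, I focus on $\enc{q\xrightarrow{\vec e_i}q_1}=(e_i\imp q_1)\imp q$ via \eqref{fl-f}; a single \eqref{fl-impl} sends the right premise $[q]\fprv q$ to \eqref{fl-al}, while an \eqref{fl-impr} on the left premise yields exactly the inductive hypothesis $q_\ell,\Delta_T,\Gamma_{\vec v+\vec e_i}\fprv q_1$. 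The \textbf{decrement} case uses two nested \eqref{fl-impl} steps on $\enc{q\xrightarrow{-\vec e_i}q_1}=q_1\imp(e_i\imp q)$: the innermost right premise is closed by $e_i,[e_i\imp q]\fprv q$ consuming one copy of $e_i$ from $\Gamma_{\vec v+\vec e_i}$, while the outer left premise receives $q_\ell,\Delta_T,\Gamma_{\vec v}\fprv q_1$ by induction. The \textbf{expansion} case is a single contraction \eqref{fl-cn} on $e_i$, legal because the succedent $q$ is atomic: from $q_\ell,\Delta_T,\Gamma_{\vec v},e_i,e_i\fprv q$ (the inductive hypothesis on $\vec v+2\vec e_i$) we derive $q_\ell,\Delta_T,\Gamma_{\vec v},e_i\fprv q$.

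The main obstacle is the \textbf{split} case, where the two inductive hypotheses $q_\ell,\Delta_{T_1},\Gamma_{\vec v_1}\fprv q_1$ and $q_\ell,\Delta_{T_2},\Gamma_{\vec v_2}\fprv q_2$ must together yield a proof of $q_\ell,\Delta_{T_1\cup T_2},\enc{q\to q_1+q_2},\Gamma_{\vec v_1+\vec v_2}\fprv q$. After focusing on $\enc{q\to q_1+q_2}=q_1\imp(q_2\imp q)$ and unpacking two nested \eqref{fl-impl} with the innermost right premise discharged by \eqref{fl-al}, I must partition the remaining antecedent so that each branch receives its required context. The multiset $\Gamma_{\vec v_1+\vec v_2}$ decomposes naturally into $\Gamma_{\vec v_1},\Gamma_{\vec v_2}$, but the single occurrence of $q_\ell$ must be duplicated to feed both branches, and similarly each rule $t \in T_1 \cap T_2$ is represented only once in $\Delta_{T_1\cup T_2}$ while both branches require $\enc{t}$. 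I resolve this by pre-applying \eqref{fl-cn} to $q_\ell$ and to $\enc{t}$ for each $t\in T_1\cap T_2$ before distributing the context; these contractions are again permitted because the succedent is atomic throughout the construction.
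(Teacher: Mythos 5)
Your proof is correct and follows essentially the same route as the paper's: the same base case, the same focused decomposition of each rule encoding with right premises closed by \eqref{fl-al} and left premises supplied by the induction hypothesis, and the same use of contraction for the expansion case and for duplicating $q_\ell$ and the shared rule encodings when distributing the context at a split. The only detail the paper makes explicit that you gloss over is that in the unary cases an extra contraction on $\enc{t}$ is needed when the applied rule $t$ already belongs to $T$ (so that $\Delta_{T\cup\{t\}}=\Delta_T$ has a single copy of $\enc{t}$), but this is the same mechanism you already invoke for the split case.
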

\begin{proof}
  We proceed by induction on the structure of the root judgement.  For
  the base case, i.e.\ for $\?B,\emptyset,q_\ell\jdg_e q_\ell,\vec
  0$, we have
  \begin{prooftree}
    \AxiomC{}
    \rlabel{fl-al}
    \UnaryInfC{$[q_\ell]\fprv q_\ell$}
    \rlabel{fl-f}
    \UnaryInfC{$q_\ell\fprv q_\ell$}
  \end{prooftree} as desired.

  For the induction step, if the last applied rule is an increment
  $t=q\xrightarrow{\vec e_i}q_1$ on a judgement $\?B,T,q_\ell\jdg_e
  q_1,\vec v+\vec e_i$, then
  \begin{prooftree}
    \AxiomC{i.h.}\noLine
    \UnaryInfC{$q_\ell,\Delta_T,\Gamma_{\vec v},e_i\fprv q_1$}
    \rlabel{fl-impr}
    \UnaryInfC{$q_\ell,\Delta_T,\Gamma_{\vec v}\fprv e_i\imp q_1$}
      \AxiomC{}
      \rlabel{fl-al}
      \UnaryInfC{$[q]\fprv q$}
    \rlabel{fl-impl}
    \BinaryInfC{$q_\ell,\Delta_T,\Gamma_{\vec v},[(e_1\imp q_1)\imp q]\fprv q$}
    \rlabel{fl-f}
    \UnaryInfC{$q_\ell,\Delta_T,\Gamma_{\vec v},(e_1\imp q_1)\imp q\fprv q$}
  \end{prooftree}
  and an additional contraction~\eqref{fl-cn} if $t\in T$
  shows $q_\ell,\Delta_{T\cup\{t\}},\Gamma_{\vec v}\fprv q$ as desired.

  If the last applied rule is a decrement $t=q\xrightarrow{-\vec
    e_i}q_1$ on a judgement $\?B,T,q_\ell\jdg_e q_1,\vec v$, then
  \begin{prooftree}
    \AxiomC{i.h.}\noLine
    \UnaryInfC{$q_\ell,\Delta_T,\Gamma_{\vec v}\fprv q_1$}
      \AxiomC{}
      \rlabel{fl-al}
      \UnaryInfC{$[e_i]\fprv e_i$}
      \rlabel{fl-f}
      \UnaryInfC{$e_i\fprv e_i$}
        \AxiomC{}
        \rlabel{fl-al}
        \UnaryInfC{$[q]\fprv q$}
      \rlabel{fl-impl}
      \BinaryInfC{$e_i,[e_i\imp q]\fprv q$}
    \rlabel{fl-impl}
    \BinaryInfC{$q_\ell,\Delta_T,\Gamma_{\vec v},e_i,[q_1\imp (e_i\imp q)]\fprv q$}
    \rlabel{fl-f}
    \UnaryInfC{$q_\ell,\Delta_T,\Gamma_{\vec v},e_i,q_1\imp (e_i\imp q)\fprv q$}
  \end{prooftree}
  and a contraction~\eqref{fl-cn} if $t\in T$
  shows $q_\ell,\Delta_{T\cup\{t\}},\Gamma_{\vec v+\vec e_i}\fprv q$ as desired.
  
  If the last applied rule is an expansion on a judgement
  $\?B,T,q_\ell\jdg_e q,\vec v+2\vec e_i$, then
  \begin{prooftree}
    \AxiomC{i.h.}\noLine
    \UnaryInfC{$q_\ell,\Delta_T,\Gamma_{\vec v},e_i,e_i\fprv q$}
    \rlabel{fl-cn}
    \UnaryInfC{$q_\ell,\Delta_T,\Gamma_{\vec v},e_i\fprv q$}
  \end{prooftree} as desired.

  Finally, if the last applied rule is a split $t=q\to q_1+q_2$ on two
  judgements $\?B,T_1,q_\ell\jdg_e q_1,\vec v_1$ and
  $\?B,T_2,q_\ell\jdg_e q_2,\vec v_2$, then
  \begin{prooftree}
    \AxiomC{i.h.}\noLine
    \UnaryInfC{$q_\ell,\Delta_{T_1},\Gamma_{\vec v_1}\fprv q_1$}
      \AxiomC{i.h.}\noLine
      \UnaryInfC{$q_\ell,\Delta_{T_2},\Gamma_{\vec v_2}\fprv q_2$}
        \AxiomC{}
        \rlabel{fl-al}
        \UnaryInfC{$[q]\fprv q$}
      \rlabel{fl-impl}
      \BinaryInfC{$q_\ell,\Delta_{T_2},\Gamma_{\vec v_2},[q_2\imp q]\fprv q$}
    \rlabel{fl-impl}
    \BinaryInfC{$q_\ell,q_\ell,\Delta_{T_1},\Delta_{T_2},\Gamma_{\vec
        v_1},\Gamma_{\vec v_2},[q_1\imp(q_2\imp q)]\fprv q$}
    \rlabel{fl-cn}
    \doubleLine
    \UnaryInfC{$q_\ell,\Delta_{T_1\cup T_2\cup \{t\}},\Gamma_{\vec
        v_1+\vec v_2}\fprv q$}
  \end{prooftree}
  as desired.
\end{proof}

The interest of the focusing calculus $F\Rel$ is that, starting from a
sequent $q_\ell,\Delta,\Gamma,[\enc{t}]\fprv q$ where $\Delta$ is in
$\+N^{\Delta_{T_u\cup T_s}}$ and $\Gamma$ in $\+N^{\{e_i\mid 0<i\leq
  d\}}$ and the focus is on the encoding of a rule $t$, there is
\emph{no choice} but to follow the proof trees shown in the proof of
\autoref{cl-exp-fl1}.  Given a multiset $m$ in $\+N^E$ for some set
$E$, we write
\begin{equation}
  \sigma(m)\eqdef\{e\in E\mid m(e)>0\}
\end{equation}
for the \emph{support} of $m$.  
\begin{restatable}[Soundness]{claim}{clexpfl}\label{cl-exp-fl2}
  Let $\Delta$ be in $\+N^{\Delta_{T_u\cup T_s}}$, $\Gamma$ in
  $\+N^{\{e_i\mid 0<i\leq d\}}$, $q$ in $Q$, and $n>0$.  Then
  $q_\ell^n,\Delta,\Gamma\fprv q$ implies
  $\?B,\sigma(\Delta),q_\ell\jdg_e q,\vec v_{\Gamma}$.
\end{restatable}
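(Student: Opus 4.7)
I will proceed by induction on the height of the given $F\Rel$ derivation of $q_\ell^n,\Delta,\Gamma\fprv q$, case-splitting on the bottommost rule. Because the succedent $q\in Q$ is atomic, only \eqref{fl-cn} and \eqref{fl-f} can be applied. The contraction cases dispose quickly: contracting a $q_\ell$ or some $\enc{t}$ leaves both $\sigma(\Delta)$ and $\vec v_\Gamma$ unchanged, so the induction hypothesis applies directly; contracting an $e_i$ yields $\?B,\sigma(\Delta),q_\ell\jdg_e q,\vec v_\Gamma+\vec e_i$ by IH, from which a single expansion \eqref{bv-e} (legitimate since $\Gamma(e_i)\geq 1$) gives $\?B,\sigma(\Delta),q_\ell\jdg_e q,\vec v_\Gamma$.

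For focus \eqref{fl-f}, focusing on an $e_i$ is impossible as $q\neq e_i$, and focusing on $q_\ell$ forces the premise $[q_\ell]\fprv q_\ell$ with $n=1$ and $\Delta=\Gamma=\emptyset$, matching the base root judgement $\?B,\emptyset,q_\ell\jdg_e q_\ell,\vec 0$. When focusing on a rule encoding $\enc{t}$, the shape of $\enc{t}$ dictates a specific sequence of \eqref{fl-impl} applications (and, in the increment case, of \eqref{fl-impr}), mirroring in reverse the constructions exhibited in the proof of \autoref{cl-exp-fl1}. In each case, after extracting the unfocused sub-premises and applying the IH, the resulting BVASS judgements are assembled by the corresponding rule of $T_u\cup T_s$; the bookkeeping $\sigma(\Delta)=\sigma(\Delta-\{\enc{t}\})\cup\{t\}$ is immediate because $\enc{t}\in\Delta$ by hypothesis.

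The technical heart of the argument, and I expect the main obstacle, is showing that the multiset partitions produced along the focused chain are tightly constrained by the absence of weakening in $\Rel$. Two observations suffice. First, for a decrement encoding, the chain leaves an auxiliary sub-sequent $\Gamma_3\fprv e_i$ to prove; since atomic identity \eqref{fl-al} requires a singleton antecedent and since no $\enc{t'}$ focus-chain can terminate with $e_i$ as its deepest focused atom (all rule encodings produce a state atom in $Q$ at that position), a short syntactic argument shows that this sub-sequent is provable iff $\Gamma_3=\{e_i\}$, so exactly one $e_i$ is consumed, matching the BVASS decrement \eqref{bv-d}. Second, for a split encoding, one obtains two unfocused sub-premises $\Gamma_a\fprv q_1$ and $\Gamma_c\fprv q_2$ with atomic state succedents; because the only state atom that ever appears on the left of any sequent in the derivation starting from $q_\ell^n,\Delta,\Gamma\fprv q$ is $q_\ell$ (right-introductions via \eqref{fl-impr} only add $e_i$'s), each sub-premise must contain at least one $q_\ell$, so that $n\geq 2$, the IH applies to both, and a BVASS split \eqref{bv-s} combines the results. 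The increment case is an analogue of the decrement case and proceeds similarly.
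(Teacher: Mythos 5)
Your proposal is correct and follows essentially the same route as the paper: induction on the focusing derivation with the same case split on \eqref{fl-cn} versus \eqref{fl-f}, the same three subcases for each, and the observation that focusing on a rule encoding forces proof search to retrace the trees from Claim~\ref{cl-exp-fl1}. You merely spell out in more detail the "no choice" argument (the forced singleton $e_i$ antecedent and the forced presence of $q_\ell$ in each split premise) that the paper leaves implicit.
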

\begin{proof}Note that $n=0$ would yield an unprovable sequent.
  We proceed by induction on the structure of a proof tree for the
  focusing sequent.  The only applicable rules in a proof search from
  $q_\ell^n,\Delta,\Gamma\fprv q$ are~\eqref{fl-f}
  and~\eqref{fl-cn}.  In the latter case, we distinguish three cases
  depending on the contracted formula $A$:
  \begin{itemize}
  \item If $A=q_\ell$, i.e.\ if $q_\ell^{n+1},\Delta,\Gamma\fprv q$,
    then by induction hypothesis $\?B,\sigma(\Delta),q_\ell\jdg_e
    q,\vec v_{\Gamma}$ as desired.
  \item If $A=e_i$ in $\Gamma$, then by induction hypothesis
    $\?B,\sigma(\Delta),q_\ell\jdg_e q,\vec v_{\Gamma}+\vec e_i$,
    and an expansion yields $\?B,\sigma(\Delta),q_\ell\jdg_e q,\vec
    v_{\Gamma}$ as desired.
  \item If $A=\enc{t}$ for some rule $t$ in $\Delta$, then the support
    $\sigma(\Delta)$ is not changed and by induction hypothesis
    $\?B,\sigma(\Delta),q_\ell\jdg_e q,\vec v_{\Gamma}$ as desired.
  \end{itemize}

  In the former case, we also distinguish three cases depending on
  which formula $A$ receives the focus:
  \begin{itemize}
  \item If $A=q_\ell$, necessarily $\Delta$ and $\Gamma$ are
    empty and $q=q_\ell$, and indeed $\?B,\emptyset,q_\ell\jdg_e
    q_\ell,\vec 0$.
  \item If $A=e_i$ in $\Gamma$, then proof search fails since $q\neq
    e_i$. 
  \item If $A=\enc{t}$ in $\Delta$, then proof search needs to follow
    the proof trees used in the proof of \autoref{cl-exp-fl1}, and
    applying the induction hypothesis on the open leaves of these
    trees allows to conclude in each case%
    .\qedhere
  \end{itemize}
\end{proof}

\begin{theorem}\label{th-lowb}
  Provability in $\Rel$ is \textsc{2-ExpTime}-hard.
\end{theorem}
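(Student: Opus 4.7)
The plan is to build on \autoref{prop-compr} and set up a logarithmic-space reduction from comprehensive expansive reachability in ordinary BVASSs to $\Rel$ provability. Given an instance $\tup{\?B,q_r,q_\ell}$ with $\?B=\tup{Q,d,T_u,T_s}$ and an enumeration $T_u\cup T_s=\{t_1,\ldots,t_k\}$ of the rules, I would take the target formula to be
\[F\eqdef q_\ell\imp\enc{t_1}\imp\cdots\imp\enc{t_k}\imp q_r,\]
where $\enc{\cdot}$ is the encoding from \autoref{sub-hard}. The formula $F$ is clearly constructible in logarithmic space from $\tup{\?B,q_r,q_\ell}$.

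The next step is to show that $\vdash F$ is provable in $L\Rel$ if and only if $\?B,T_u\cup T_s,q_\ell\jdg_e q_r,\vec 0$. Applying \autoref{th-focus} transfers the question to provability of $\fprv F$ in $F\Rel$. Since \eqref{fl-impr} has priority in $F\Rel$ whenever the succedent is an implication, any proof of $\fprv F$ must begin with $k+1$ successive applications of \eqref{fl-impr} that reduce it to the focusing sequent $q_\ell,\Delta_{T_u\cup T_s}\fprv q_r$, noting that $\Gamma_{\vec 0}=\emptyset$. One direction then follows from \autoref{cl-exp-fl1} instantiated with $T=T_u\cup T_s$, $q=q_r$, and $\vec v=\vec 0$; the converse from \autoref{cl-exp-fl2} instantiated with $n=1$, $\Delta=\Delta_{T_u\cup T_s}$, $\Gamma=\emptyset$, and $q=q_r$---crucially, $\sigma(\Delta_{T_u\cup T_s})=T_u\cup T_s$, so that the recovered root judgement is indeed comprehensive and not merely reachable.

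Chaining these equivalences with \autoref{prop-compr} yields a logspace reduction from a \textsc{2-ExpTime}-hard problem to $\Rel$ provability, which delivers the lower bound. The delicate point---that a proof of $\fprv F$ might in principle exploit the contractions \eqref{fl-cn} and focus rule \eqref{fl-f} to build derivations with no BVASS counterpart---has already been handled by the focusing discipline: because \eqref{fl-al} is the only way to discharge a focused formula, any focus landing on an encoded rule $\enc{t}$ is forced by proof search to replay exactly the derivation templates used in the completeness claim. Nothing substantial remains beyond assembling these ingredients into the chain of equivalences above.
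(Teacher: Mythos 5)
Your proposal is correct and follows essentially the same route as the paper: the same formula $F=q_\ell\imp\enc{t_1}\imp\cdots\imp\enc{t_k}\imp q_r$, the same appeal to \autoref{th-focus} and the invertibility of \eqref{fl-impr} to reduce to $q_\ell,\Delta_{T_u\cup T_s}\fprv q_r$, and the same use of \autoref{cl-exp-fl1} and \autoref{cl-exp-fl2} (with $n=1$ and $\sigma(\Delta_{T_u\cup T_s})=T_u\cup T_s$) to close the equivalence with comprehensive expansive reachability.
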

\begin{proof}
  We reduce from the comprehensive expansive reachability problem,
  which is \textsc{2-ExpTime}-hard by \autoref{prop-compr}.  From an
  instance $\tup{\?B,q_r,q_\ell}$ where $\?B=\tup{Q,d,T_u,T_s}$, we
  construct a formula $F\eqdef q_\ell\imp \varphi(T_u\cup T_s,q_r)$
  defined by
  \begin{align}
    \varphi(\emptyset,q_r)&\eqdef q_r\;,&
    \varphi(T\uplus\{t\},q_r)&\eqdef \enc{t}\imp \varphi(T,q_r)\;.
  \end{align}
  By~\autoref{th-focus}, $\vdash F$ if and only if $\fprv F$.  The
  latter holds if and only if $q_\ell,\Delta_{T_u\cup T_s}\fprv q_r$
  since we can only apply~\eqref{fl-impr}.  Then this occurs if and
  only if $\?B,T_u\cup T_s,q_\ell\jdg_e q_r,\vec 0$ by
  \autoref{cl-exp-fl1} and \autoref{cl-exp-fl2}.
\end{proof}

\section{Extensions}\label{sec-ext}
\ifshort
\paragraph{Adding Multiplicatives.}
\else The complexity bounds proven in \autoref{th-upb} and
\autoref{th-lowb} for $\Rel$ can be generalised to larger fragments of
$\mathbf{R}$ and of propositional intuitionistic contractive linear
logic (\textbf{ILLC}).

\subsection{Adding \detitle{Multiplicatives}}
\fi
The sequent system $L\Rel$ for $\Rel$ can be extended to accommodate
further multiplicative connectives: the \emph{fusion} connective
$\circ$ (aka.\ `co-tenability' in \citep{entailment1}) and the
sentential constant $\mathbf t$:
\begin{equation*}
  \setlength{\arraycolsep}{8pt}
  \begin{array}{cc} 
    \tabrulelabelr[left truth]{rl-tl}{\Gamma\vdash A}{\Gamma,\mathbf
    t\vdash A}{\mathbf t_L}&
    \tabrulelabelr[right truth]{rl-tr}{}{\vdash\mathbf
    t}{\mathbf t_R}\\[1.5em]
    \tabrulelabelr[left fusion]{rl-fl}{\Gamma,A,B\vdash
    C}{\Gamma,A\circ B\vdash C}{\circ_L}&
    \tabrulelabelr[right fusion]{rl-fr}{\Gamma\vdash
    A\quad\Delta\vdash B}{\Gamma,\Delta\vdash A\circ B}{\circ_R}
  \end{array}
\end{equation*}
Let us call $L\mathbf{R}_{\imp,\circ}^{\mathbf t}$ the resulting
sequent system.  The BVASS $\?B_F$ presented in
\autoref{sub-rel-bvass} can be extended in a straightforward manner
with the rules of \autoref{fig-mll-bvass} and by identifying $q_\ell$
with $\mathbf t$.  Thanks to
\autoref{prop-exp-cov} and \autoref{cor-bvass}, this shows that
provability in $L\mathbf{R}_{\imp,\circ}^{\mathbf t}$ is in
\textsc{2-ExpTime}.
\begin{figure}[tbp]%
  \centering
  \begin{tikzpicture}[auto,node distance=.65cm]
    \node[state,label=left:{\ref{rl-tl}:}](Lone0){$A$};
    \node[state,right=of Lone0](Lone1){$A$};
    \path (Lone0) edge node{$-\vec e_{\mathbf t}$} (Lone1);
    \node[state,right=1.3cm of Lone1,label=left:{\ref{rl-fl}}](Lotimes0){$C$};
    \node[intermediate,right=1cm of Lotimes0](Lotimes1){};
    \node[state,right=1.2cm of Lotimes1](Lotimes2){$C$};
    \path (Lotimes0) edge node{$-\vec e_{A\circ B}$} (Lotimes1)
      (Lotimes1) edge node{$\vec e_A+\vec e_B$} (Lotimes2);
    \node[rstate,right=1.3cm of Lotimes2,label=left:{\ref{rl-fr}:},
      label=right:{$\,+$}](Rotimes0){$A\circ B$};
    \draw[draw=black!40] (Rotimes0.east) ++(.1,-.1) arc (-40:40:.14);
    \node[state,above right=0cm and .5cm of Rotimes0](Rotimes1){$A$};
    \node[state,below right=0cm and .5cm of Rotimes0](Rotimes2){$B$};
    \path (Rotimes0.east) edge (Rotimes1)
      (Rotimes0.east) edge (Rotimes2);
  \end{tikzpicture}\vspace*{-.5em}
  \caption{\label{fig-mll-bvass}The additional BVASS rules for $L\mathbf{R}_{\imp,\circ}^{\mathbf t}$.}
\end{figure}

Note that the sequent system $L\mathbf{R}_{\imp,\circ}^{\mathbf t}$ is
the same as that of intuitionistic multiplicative contractive linear
logic~(\textbf{IMLLC}), where $\imp$, $\circ$, and $\mathbf t$ are
usually noted respectively $\multimap$, $\otimes$, and $\mathbf 1$.

\ifshort
\paragraph{Adding Exponentials.}
\else
\subsection{Adding \detitle{Exponentials}}
\label{sub-mellc}
\fi
In fact, essentially the same reduction from sequent calculus to
expansive BVASS reachability can be carried over for the more general
\emph{multiplicative exponential} fragment of intuitionistic
contractive linear logic with bottom $\mathbf{IMELZC}$, see
\citep[\propositionautorefname~9]{lazic14}.  The main differences are
that:
\begin{enumerate}
\item The exponential connectives incur an exponential blow-up in the
  number of states of the constructed BVASS $\?B_F$: its state space
  now contains $S\times 2^{S_!}$ where $S_!$ is the set of exponential
  subformul\ae\ of $F$.  The subsequent reduction to coverability
  in \autoref{sub-exp-cov} then performs a product with $2^{S\setminus
  S_!}$ (contractions in exponential subformul\ae\ being already
  handled), hence the resulting state space remains of size
  $2^{p(|F|)}$ for some polynomial $p$.
\item The exponential connectives also require an additional operation
  of \emph{full zero test}: as shown in
  \citep[\lemmaautorefname~3]{lazic14}, this operation can be
  eliminated at no cost in complexity.
\end{enumerate}
By~\autoref{th-lowb}, \autoref{prop-exp-cov}, and~\autoref{cor-bvass},
we conclude:
\begin{theorem}
  Provability in the logics $\Rel$, $\mathbf{R}^{\mathbf{t}}_{\imp}$, $\mathbf{IMLLC}$, 
$\mathbf{IMELLC}$, and $\mathbf{IMELZC}$
  is \textsc{2-ExpTime}-complete.
\end{theorem}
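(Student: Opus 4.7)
The plan splits into a lower bound and an upper bound that is handled in two tiers.

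For the lower bound, I would observe that since $\Rel$ already embeds as a sub-system into each of $\mathbf{R}^{\mathbf{t}}_{\imp}$, $\mathbf{IMLLC}$, $\mathbf{IMELLC}$, and $\mathbf{IMELZC}$, it suffices to argue \emph{conservativity} on purely implicational formul\ae: by the subformula property of the corresponding cut-free sequent calculi, any proof of a formula built only from $\imp$ and atoms can employ only the four rules of $L\Rel$, so provability coincides with provability in $\Rel$. \textsc{2-ExpTime}-hardness then transfers from \autoref{th-lowb} along the identity reduction.

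For the upper bound on $\mathbf{R}^{\mathbf{t}}_{\imp}$ and $\mathbf{IMLLC}$, I would extend the construction of $\?B_F$ from \autoref{sub-rel-bvass} with the rules of \autoref{fig-mll-bvass}, identifying $q_\ell$ with the constant $\mathbf{t}$. A routine induction generalising \autoref{cl-rel-bvass} to the two new connectives establishes that $\vdash F$ iff $\?B_F, q_\ell \jdg_e F, \vec 0$. Since $\?B_F$ still has $O(|F|^2)$ states and dimension $|F|$, \autoref{prop-exp-cov} reduces to coverability in a BVASS with $2^{p(|F|)}$ states, and \autoref{cor-bvass} yields \textsc{2-ExpTime}.

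For $\mathbf{IMELLC}$ and $\mathbf{IMELZC}$, I would invoke the reduction of \citep[\propositionautorefname~9]{lazic14} from provability to expansive BVASS reachability, which produces a BVASS with state space $S \times 2^{S_!}$ and dimension polynomial in $|F|$. The only two points to check are complexity-preserving: (i) composing this singly exponential state space with the further blow-up by $2^{|F|}$ incurred by \autoref{prop-exp-cov} keeps the state count in $2^{p(|F|)}$ for some polynomial $p$; (ii) the full zero-test that the $\mathbf{IMELZC}$ encoding requires can be removed at no asymptotic cost by \citep[\lemmaautorefname~3]{lazic14}. \autoref{cor-bvass} then delivers the \textsc{2-ExpTime} bound.

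The main obstacle here is purely bookkeeping: verifying that after chaining the singly exponential state blow-up from the $\mathbf{IMELZC}$ encoding with the further singly exponential blow-up of \autoref{prop-exp-cov}, the parameters $n$ (size of $\tup{d, T_u}$) and $\log|Q|$ appearing in the $2^{2^{O(n \cdot \log(n \cdot \log |Q|))}}$ bound of \autoref{cor-bvass} remain polynomial in $|F|$, so that the final time bound stays doubly exponential. Once this is checked, the theorem follows by combining \autoref{th-lowb}, \autoref{prop-exp-cov}, \autoref{cor-bvass}, and the two results cited from \citep{lazic14}.
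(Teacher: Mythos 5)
Your proposal follows essentially the same route as the paper: hardness inherited from $\Rel$ (Theorem~\ref{th-lowb}), the upper bound for $\mathbf{R}^{\mathbf{t}}_{\imp}$ and $\mathbf{IMLLC}$ via the extended BVASS of \autoref{fig-mll-bvass} with $q_\ell$ identified with $\mathbf{t}$, and the upper bound for $\mathbf{IMELLC}$/$\mathbf{IMELZC}$ via \citep[\propositionautorefname~9 and \lemmaautorefname~3]{lazic14} composed with \autoref{prop-exp-cov} and \autoref{cor-bvass}. The only difference is that you make explicit the conservativity of the larger calculi over $\Rel$ on implicational formul\ae\ (via cut-freeness and the subformula property), a step the paper leaves implicit; this is a welcome clarification rather than a divergence.
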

It seems likely that the non-intuitionistic variants $\mathbf{MLLC}$
and $\mathbf{MELLC}$, i.e.\ multiplicative and multiplicative
exponential contractive linear logic, are
also \textsc{2-ExpTime}-complete: the upper bound follows from the
bound on $\mathbf{IMELZC}$\ifshort\relax\else, and the lower bound
should not cause any difficulty, but is beyond the scope of this
paper\fi.

\ifshort\relax\else
\subsection{Adding \detitle{Additives}}
The key difference between $L\mathbf{R}_{\imp,\circ}^{\mathbf t}$ and
the \textsc{Ackermann}-complete sequent calculus studied by
\citet{urquhart99} for the conjunctive-implicational fragment
$\mathbf{R}_{\imp,\wedge}$ is the absence of conjunction $\wedge$ and
disjunction $\vee$, which are additive connectives that require a
richer model of \emph{alternating} BVASS; see~\citep{lazic14}.
\fi

\section{Concluding Remarks}\label{sec-concl}
Besides closing a longstanding open problem, the proof that $\Rel$
is \textsc{2-ExpTime}-complete paves the way for new investigations:
\begin{itemize}
\item In spite of the high worst-case complexity of BVASS
coverability, \citet{majumdar13} have recently presented a practical
algorithm with encouraging initial results.  The reduction
in \autoref{sec-upb} allows to transfer their
techniques to $\Rel$ provability, but might incur a
worst-case exponential blow-up.
\item Provability in the related implicational fragment $\mathbf
  T_\imp$ of ticket entailment has recently been proven decidable
  independently by \citet{padovani13} and \citet{bimbo13}.  Although
  the complexity of this problem is currently unknown, the latter
  proof relies on provability in $L\mathbf{R}^{\mathbf{t}}_{\imp}$,
  which we prove to be \textsc{2-ExpTime}-complete in
  \autoref{sec-ext}.
\end{itemize}\ifshort\vspace{-1em}\fi

\subsection*{Acknowledgements}
The author thanks David Baelde for his excellent suggestion of
employing focusing proofs and helpful discussions around their uses.

\appendix
\section{Focusing Calculus}\label{sec-focus}
The purpose of this appendix is to prove \autoref{th-focus}: $F\Rel$
is sound and complete for $\Rel$.  We first prove some preliminary
statements:

\begin{lemma}[Right implication is invertible]\label{cl-invimpr}
  The sequent $\Gamma\fprv A\imp B$ is provable in $F\Rel$ if and
  only if $\Gamma,A\fprv B$ is provable.
\end{lemma}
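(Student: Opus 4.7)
The plan is a direct inspection of the rules of $F\Rel$, case-analysing on the last step of a proof. No induction on proof depth or cut-admissibility is required; the invertibility is almost immediate from the syntactic shape of the rules.

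The ``if'' direction is immediate: a single application of the right-implication rule~\eqref{fl-impr} to a given proof of $\Gamma, A \fprv B$ yields a proof of $\Gamma \fprv A \imp B$.

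For the ``only if'' direction, I would examine the five rules of $F\Rel$ and observe that~\eqref{fl-al}, \eqref{fl-f}, \eqref{fl-cn}, and~\eqref{fl-impl} all have conclusions whose succedent is an atomic proposition in $\?A$ (this is precisely the design choice that restricts those rules to operate on sequents with atomic succedents). Since $A \imp B$ is non-atomic, none of them can produce a sequent of the form $\Gamma \fprv A \imp B$. Hence the only possibility is that the given proof ends with~\eqref{fl-impr}; matching the conclusion of that rule against $\Gamma \fprv A \imp B$ determines its premise uniquely as $\Gamma, A \fprv B$, which is therefore provable.

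So the statement is essentially a one-line syntactic observation: the main (and only) thing to verify is the exhaustive case analysis of which rules can produce an unfocused sequent with a non-atomic succedent. I expect no real obstacle — this lemma is set up as a convenient building block to be used later, presumably in the proof of cut-admissibility and completeness of $F\Rel$ with respect to $L\Rel$.
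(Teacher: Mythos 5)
Your proposal is correct and matches the paper's proof, which is exactly the one-line observation that no rule of $F\Rel$ other than~\eqref{fl-impr} can yield a sequent with a non-atomic succedent $A\imp B$. Your slightly more explicit case-by-case check of the other four rules is just an expansion of the same argument.
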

\begin{proof}
  No other rule of $F\Rel$ can yield a non-atomic succedent $A\imp
  B$.
\end{proof}

\begin{lemma}[Identity is admissible]\label{cl-id}
  Let $A=A_n\imp\cdots\imp A_1\imp a$ where the $A_i$'s are arbitrary
  implicational formul\ae\ and $a$ is atomic.  Then the following rules
  are admissible in $F\Rel$:
  \begin{equation*}
    \tabrulelabelr[focused identity]{fl-idf}{}{A_1,\dots,A_n,[A]\fprv
      a}{Id_{[\,]}}\qquad
    \tabrulelabelr[identity]{fl-id}{}{A\fprv A}{Id^\mathit{f}}
  \end{equation*}
\end{lemma}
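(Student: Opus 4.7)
The plan is to prove both rules simultaneously by structural induction on the formula $A$, establishing \eqref{fl-idf} first and then deriving \eqref{fl-id} from it.

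\textbf{The focused identity \eqref{fl-idf}.} Writing $A = A_n \imp \cdots \imp A_1 \imp a$, I proceed by induction on $n$ (equivalently, on the structure of $A$, since all $A_i$ are proper subformulae of $A$). For the base case $n = 0$, we have $A = a$ atomic, and $[a] \fprv a$ is immediate from \eqref{fl-al}. For the inductive step $n \geq 1$, I apply \eqref{fl-impl} to the focused formula, whose outermost implication is $A_n \imp (A_{n-1} \imp \cdots \imp A_1 \imp a)$, splitting the derivation as
\begin{prooftree}
    \AxiomC{$A_n \fprv A_n$}
    \AxiomC{$A_1, \dots, A_{n-1}, [A_{n-1} \imp \cdots \imp A_1 \imp a] \fprv a$}
    \rlabel{fl-impl}
    \BinaryInfC{$A_1, \dots, A_n, [A] \fprv a$}
\end{prooftree}
The right premise is an instance of \eqref{fl-idf} at the strictly smaller formula $A_{n-1} \imp \cdots \imp A_1 \imp a$, handled by the induction hypothesis. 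The left premise is an instance of \eqref{fl-id} at the strictly smaller formula $A_n$, which by the induction hypothesis can itself be derived from \eqref{fl-idf} on $A_n$ as described below.

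\textbf{The identity rule \eqref{fl-id}.} Given \eqref{fl-idf}, the derivation of $A \fprv A$ is straightforward: applying \eqref{fl-impr} exactly $n$ times reduces the goal to $A_n, \dots, A_1, A \fprv a$ (each right-implication introduction strips one $A_i$ from the succedent and adds it to the antecedent, where $A$ itself remains untouched since it is already in the antecedent); then \eqref{fl-f} focuses on $A$, yielding $A_n, \dots, A_1, [A] \fprv a$, which is precisely \eqref{fl-idf}.

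\textbf{Justifying the induction.} The two rules reference each other, but the recursion is well-founded because every call to \eqref{fl-id} inside the step case for \eqref{fl-idf} is on a proper subformula $A_n$, and the derivation of \eqref{fl-id} on $A_n$ reduces to \eqref{fl-idf} on $A_n$ again (no increase in formula size). Concretely, I would package this as a single structural induction on $A$ establishing both rules together, or equivalently as an induction on the total number of implication symbols in $A$.

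\textbf{Main obstacle.} The only delicate point is bookkeeping the mutual dependency between the two rules and checking that the contexts in each premise match exactly what the focusing rules permit (in particular that no unwanted weakening or contraction is needed, since relevance forbids both). Once one observes that $A_n$ is a strict subformula so that the induction hypothesis applies to it, and that the multiset of $A_i$'s distributes correctly across the \eqref{fl-impl} step, the argument closes without further complication.
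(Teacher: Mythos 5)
Your proposal is correct and takes essentially the same route as the paper: a simultaneous structural induction on $A$ establishing \eqref{fl-idf} and \eqref{fl-id} together, with the inductive step for \eqref{fl-idf} peeling off the outermost implication via \eqref{fl-impl} (identity on the strict subformula $A_n$ on the left, focused identity on the tail on the right), and \eqref{fl-id} obtained from \eqref{fl-idf} by \eqref{fl-f} followed by repeated \eqref{fl-impr}. Your remark on the well-foundedness of the mutual recursion matches the paper's implicit justification.
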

\begin{proof}
  Observe that \eqref{fl-id} can always be derived from \eqref{fl-idf}
  by
  \begin{prooftree}
    \AxiomC{}
    \rlabel{fl-idf}
    \UnaryInfC{$A_1,\dots,A_n,[A]\fprv a$}
    \rlabel{fl-f}
    \UnaryInfC{$A_1,\dots,A_n,A\fprv a$}
    \rlabel{fl-impr}
    \doubleLine
    \UnaryInfC{$A\fprv A$}
  \end{prooftree}
  
  We thus prove the two rules simultaneously by induction over the
  structure of $A$.  The base case for $A$ atomic (i.e.\ for
  $n=0$) holds thanks to the \eqref{fl-al} rule; for the induction
  step, we have the proof
  \begin{prooftree}
    \AxiomC{i.h.}
    \rlabel{fl-id}
    \UnaryInfC{$A_{i+1}\fprv A_{i+1}$}
      \AxiomC{i.h.}
      \rlabel{fl-idf}
      \UnaryInfC{$A_{i},\dots,A_1,[A_i\imp\cdots\imp A_1\imp a]\fprv a$}
    \rlabel{fl-impl}
    \BinaryInfC{$A_{i+1},A_i,\dots,A_1,[A_{i+1}\imp A_i\imp\cdots\imp A_1\imp a]\fprv a$}
  \end{prooftree}\vspace*{-1.4em}
\end{proof}

Let us introduce extra notation: lower-case $\gamma,\delta,\dots$
denote multisets of implicational or focused formul\ae\ containing
at most one focused formula.
\begin{lemma}[Mix is admissible]\label{cl-mix}
  The mix rules \eqref{fl-mix} and \eqref{fl-mixf} are admissible in
  $F\Rel$ for all $n\geq 0$:
  \begin{gather*}
    \tabrulelabelr[mix]{fl-mix}{\Gamma\fprv A\quad\delta,A^{n+1}\fprv
      a}{\Gamma,\delta\fprv a}{mix}\qquad
    \tabrulelabelr[mix]{fl-mixf}{\Gamma\fprv A\quad\Delta,A^{n},[A]\fprv
      a}{\Gamma,\Delta\fprv a}{mix_{[\,]}}
  \end{gather*}
\end{lemma}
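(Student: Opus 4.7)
My plan is to prove \eqref{fl-mix} and \eqref{fl-mixf} simultaneously, by lexicographic induction on the pair $(|A|,h_1+h_2)$, where $|A|$ is the size of the cut formula and $h_1,h_2$ are the heights of the two premise derivations. The primary induction on $|A|$ will drive the principal reductions at $\imp$; the secondary induction on proof height will handle all the commutative and structural cases.

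For \eqref{fl-mix} I would case-split on the last rule applied to derive $\delta, A^{n+1}\fprv a$. Commutative cases---\eqref{fl-cn} or \eqref{fl-f} acting on an unmarked formula of $\delta$, or \eqref{fl-impl} or \eqref{fl-al} acting on a focused formula already sitting inside $\delta$---close by invoking the secondary ih on each subpremise and then reapplying the bottom rule. The two distinctive structural cases concern the marked copies themselves: \eqref{fl-cn} contracting two marked copies of $A$ is handled by the secondary ih with parameter $n+1$, while \eqref{fl-f} focusing a marked copy reduces to \eqref{fl-mixf} at parameter $n$ via the secondary ih.

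For \eqref{fl-mixf} the commutative cases behave analogously, and the crucial cases are those where $[A]$ is principal. If $A=a$ is atomic the last rule must be \eqref{fl-al}, forcing $n=0$ and $\Delta=\emptyset$, so the conclusion $\Gamma\fprv a$ coincides with the first premise. If $A=B\imp C$ the last rule is \eqref{fl-impl}, yielding subderivations $\Delta_1, A^{n_1}\fprv B$ and $\Delta_2, A^{n_2},[C]\fprv a$ with $\Delta_1\cup\Delta_2=\Delta$ and $n_1+n_2=n$. I would invoke \autoref{cl-invimpr} on $\Gamma\fprv A$ to obtain $\Gamma,B\fprv C$, and then chain three groups of cuts: first eliminate the unmarked $A$-copies from each subderivation using the secondary ih of \eqref{fl-mix} at the same cut formula $A$; next apply the primary ih on the strictly smaller formula $B$ to combine the resulting $B$-sequent with $\Gamma,B\fprv C$; and finally apply the primary ih on $C$ in its \eqref{fl-mixf} variant against the $[C]$-focused subsequent to produce the goal $\Gamma,\Delta\fprv a$.

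The main technical obstacle I anticipate is that several intermediate cuts---notably the sub-cut on $B$ producing a sequent with succedent $C$, and the $A$-cuts that arise while commuting through \eqref{fl-impl}---have a non-atomic succedent, whereas both mix rules as stated require an atomic one. I would bridge this with an invert--cut--rebuild manoeuvre: first invert the non-atomic succedent down to its atomic skeleton by iterated \autoref{cl-invimpr}, then apply the appropriate mix at the atomic level as permitted by the induction hypothesis, and finally rebuild the succedent by re-applying \eqref{fl-impr}. This keeps every appeal to the ih bona fide, but requires careful bookkeeping of copy counts and proof heights to ensure that the secondary induction measure strictly decreases at each invocation.
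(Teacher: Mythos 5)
Your proposal follows essentially the same route as the paper's proof: a double induction with the cut formula as primary measure and the right-hand derivation as secondary, a case split on the last rule of that derivation with commutative cases closed by the secondary hypothesis, and the principal $[B\imp C]$ case resolved by inverting \eqref{fl-impr} on the left premise via \autoref{cl-invimpr} and chaining two cuts on the strictly smaller subformulae (you cut the antecedent before the consequent, the paper does the reverse, which is immaterial). Your explicit invert--cut--rebuild manoeuvre for the intermediate mixes with non-atomic succedents addresses a step the paper's write-up applies silently, so the proposal is, if anything, slightly more careful on that point.
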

\begin{proof}
  Assume that the two premises of a mix rule can be proved in $F\Rel$.
  We show by induction, first on the structure of the cut formula $A$,
  and second on the structure of the proof of the right premise, that
  the conclusion of the mix rule can be proved in $F\Rel$.  Let us
  consider for this the last rule applied in the proof of the right
  premise:
  \begin{description}
    \item[\nameref{fl-al}] this is only possible for
      \begin{prooftree}
        \AxiomC{$\Gamma\fprv a$}
          \AxiomC{}
          \rlabel{fl-al}
          \UnaryInfC{$[a]\fprv a$}
        \rlabel{fl-mixf}
        \BinaryInfC{$\Gamma\fprv a$}
      \end{prooftree}
      and could be obtained directly from the left premise.
    \item[\nameref{fl-f}] this is only possible for
      \begin{prooftree}
        \AxiomC{$\Gamma\fprv A$}
          \AxiomC{$\Delta,A^n,[A]\fprv a$}
          \rlabel{fl-f}
          \UnaryInfC{$\Delta,A^{n+1}\fprv a$}
        \rlabel{fl-mix}
        \BinaryInfC{$\Gamma,\Delta\fprv a$}
      \end{prooftree}
      and could be obtained directly from \eqref{fl-mixf} with a
      sub-proof of its right premise, thus by induction hypothesis there
      is a mix-free proof of $\Gamma,\Delta\fprv a$.
    \item[\nameref{fl-cn}] this is only possible for
      \begin{prooftree}
        \AxiomC{$\Gamma\fprv A$}
          \AxiomC{$\Delta,A^{n+2}\fprv a$}
          \rlabel{fl-cn}
          \UnaryInfC{$\Delta,A^{n+1}\fprv a$}
        \rlabel{fl-mix}
        \BinaryInfC{$\Gamma,\Delta\fprv a$}
      \end{prooftree}
      and could be obtained directly from \eqref{fl-mix} with a
      sub-proof of its right premise, thus by induction hypothesis
      there is a mix-free proof of $\Gamma,\Delta\fprv a$.
    \item[\nameref{fl-impl}] here we consider two cases depending on
      whether \eqref{fl-mix} or \eqref{fl-mixf} is applied.  Here is
      the situation with \eqref{fl-mix}:
      \begin{prooftree}
        \AxiomC{$\Gamma\fprv A$}
          \AxiomC{$\Delta_1,A^{n_1}\fprv B$}
            \AxiomC{$\Delta_2,A^{n_2},[C]\fprv a$}
          \rlabel{fl-impl}
          \BinaryInfC{$\Delta_1,\Delta_2,A^{n_1+n_2},[B\imp C]\fprv a$}
        \rlabel{fl-mix}
        \BinaryInfC{$\Gamma,\Delta_1,\Delta_2,[B\imp C]\fprv a$}
      \end{prooftree}
      We replace this proof by the following:
      \begin{prooftree}
        \AxiomC{$\Gamma\fprv A$}
          \AxiomC{$\Delta_1,A^{n_1}\fprv B$}
        \rlabel{fl-mix}
        \BinaryInfC{$\Gamma,\Delta_1\fprv B$}
            \AxiomC{$\Gamma\fprv A$}
              \AxiomC{$\Delta_2,A^{n_2},[C]\fprv a$}
            \rlabel{fl-mixf}
            \BinaryInfC{$\Gamma,\Delta_2,[C]\fprv a$}
        \rlabel{fl-impl}
        \BinaryInfC{$\Gamma,\Gamma,\Delta_1,\Delta_2,[B\imp C]\fprv
          a$}
        \rlabel{fl-cn}
        \doubleLine
        \UnaryInfC{$\Gamma,\Delta_1,\Delta_2,[B\imp C]\fprv a$}
      \end{prooftree}
      where the two introduced mix rules have sub-proofs of the
      original right premise as right premises, and can therefore be
      turned into mix-free proofs by induction hypothesis.

      The situation with \eqref{fl-mixf} is more involved:
      \begin{prooftree}
        \AxiomC{$\Gamma\fprv A\imp B$}
          \AxiomC{$\Delta_1,(A\imp B)^{n_1}\fprv A$}
            \AxiomC{$\Delta_2,(A\imp B)^{n_2},[B]\fprv a$}
          \rlabel{fl-impl}
          \BinaryInfC{$\Delta_1,\Delta_2,(A\imp B)^{n_1+n_2},[A\imp B]\fprv a$}
        \rlabel{fl-mixf}
        \BinaryInfC{$\Gamma,\Delta_1,\Delta_2\fprv a$}
      \end{prooftree}
      We prove this in two steps: we first prove
      $\Gamma,\Delta_2,A\fprv a$ by
      \begin{prooftree}
        \AxiomC{inv.\ \eqref{fl-impr}}
        \noLine
        \UnaryInfC{$\Gamma,A\fprv B$}
          \AxiomC{$\Gamma\fprv A\imp B$}
            \AxiomC{$\Delta_2,(A\imp B)^{n_2},[B]\fprv a$}
          \rlabel{fl-mix}
          \BinaryInfC{$\Gamma,\Delta_2,[B]\fprv a$}
        \rlabel{fl-mixf}
        \BinaryInfC{$\Gamma,A,\Gamma,\Delta_2\fprv a$}
        \rlabel{fl-cn}
        \doubleLine
        \UnaryInfC{$\Gamma,\Delta_2,A\fprv a$}
      \end{prooftree}
      This relies on \autoref{cl-invimpr} to show $\Gamma,A\fprv B$.
      By induction hypothesis, the top \eqref{fl-mix} is applied
      to a sub-proof of the right premise and the bottom
      \eqref{fl-mixf} uses a subformula $B$ of the original cut
      formula $A\imp B$, thus both have mix-free proofs.

      As a second step, we use $\Gamma,\Delta_2,A\fprv a$ in
      \begin{prooftree}
        \AxiomC{$\Gamma\fprv A\imp B$}
          \AxiomC{$\Delta_1,(A\imp B)^{n_1}\fprv A$}
        \rlabel{fl-mix}
        \BinaryInfC{$\Gamma,\Delta_1\fprv A$}
          \AxiomC{$\Gamma,\Delta_2,A\fprv a$}
        \rlabel{fl-mix}
        \BinaryInfC{$\Gamma,\Gamma,\Delta_1,\Delta_2\fprv a$}
        \rlabel{fl-cn}
        \doubleLine
        \UnaryInfC{$\Gamma,\Delta_1,\Delta_2\fprv a$}
      \end{prooftree}
      The top \eqref{fl-mix} has a sub-proof as right premise, while
      the bottom \eqref{fl-mix} uses a subformula $A$ of the cut
      formula $A\imp B$, thus both have mix-free proofs, yielding an
      overall mix-free proof for $\Gamma,\Delta_1,\Delta_2\fprv a$.
    \item[\nameref{fl-impr}] cannot yield a sequent with an atomic
      succedent $a$.\qedhere
  \end{description}
\end{proof}

\thfocus*
\begin{proof}
  First observe that any proof in $F\Rel$ can be transformed into a
  proof in $L\Rel$ by removing the focus information in sequents and
  suppressing \eqref{fl-f} rules.

  Conversely, we show by induction over the structure of the proof of
  $\Gamma\vdash A$ in $L\Rel$ that $\Gamma\fprv A$ in $F\Rel$.  To
  this end, let us examine the last applied rule in the proof of
  $\Gamma\vdash A$:
  \begin{description}
  \item[\nameref{rl-ax}] by \autoref{cl-id}, the corresponding
    sequent $A\fprv A$ can be proved in $F\Rel$.
  \item[\nameref{rl-cn}] let $B=B_1\imp\cdots\imp B_m\imp b$ where the
    $B_i$'s are implicational formul\ae\ and $b$ is an atomic formula.
    By induction hypothesis, $\Gamma,A,A\fprv B$ is provable, thus by
    \autoref{cl-invimpr}, we have the proof
    \begin{prooftree}
      \AxiomC{$\Gamma,A,A,B_1,\dots,B_m\fprv b$}
      \rlabel{fl-cn}
      \UnaryInfC{$\Gamma,A,B_1,\dots,B_m\fprv b$}
      \rlabel{fl-impr}
      \doubleLine
      \UnaryInfC{$\Gamma,A\fprv B$}
    \end{prooftree}
  \item[\nameref{fl-impl}] by induction hypothesis, $\Gamma\fprv A$
    and $\Delta,B\fprv C$.  Let $B=B_1\imp\cdots\imp B_m\imp b$ and
    $C=C_1\imp\cdots\imp C_n\imp c$ where $b$ and $c$ are atomic.
    Using \autoref{cl-id} and \autoref{cl-mix}, we first prove
  \begin{prooftree}
    \AxiomC{{\small i.h.}}
    \noLine
    \UnaryInfC{$\Gamma\fprv A$}
      \AxiomC{}
      \rlabel{fl-id}
      \UnaryInfC{$A\fprv A$}
        \AxiomC{}
        \rlabel{fl-idf}
        \UnaryInfC{$B_1,\dots,B_m,[B]\fprv b$}
      \rlabel{fl-impl}
      \BinaryInfC{$A,B_1,\dots,B_m,[A\imp B]\fprv b$}
      \rlabel{fl-f}
      \UnaryInfC{$A,B_1,\dots,B_m,A\imp B\fprv b$}
    \rlabel{fl-mix}
    \BinaryInfC{$\Gamma,B_1,\dots,B_m,A\imp B\fprv b$}
    \rlabel{fl-impr}
    \doubleLine
    \UnaryInfC{$\Gamma,A\imp B\fprv B$}
  \end{prooftree}
  Thus, using \autoref{cl-invimpr} and \autoref{cl-mix}, we have:
  \begin{prooftree}
    \AxiomC{$\Gamma,A\imp B\fprv B$}
          \AxiomC{{\small i.h.\ and inv.\ \eqref{fl-impr}}}
          \noLine
          \UnaryInfC{$\Delta,B,C_1,\dots,C_n\fprv c$}
    \rlabel{fl-mix}
    \BinaryInfC{$\Gamma,\Delta,A\imp B,C_1,\dots,C_n\fprv c$}
    \rlabel{fl-impr}
    \doubleLine
    \UnaryInfC{$\Gamma,\Delta,A\imp B\fprv C$}
  \end{prooftree}
  \item[\nameref{rl-impr}] the $F\Rel$ rule \eqref{fl-impr} is identical.\qedhere
  \end{description}
\end{proof}

\section{Branching Vector Addition Systems}\label{sec-bvas}
Our goal in this appendix is to prove \autoref{cor-bvass}.  Let us
first define branching VASs in a more formal way.  A BVAS is a tuple
$\?B=\tup{d,T_u,T_s}$ where both $T_u$ and $T_s$ are finite sets of
rules in $\+Z^d$.  The configurations of $\?B$ are thus in $\+N^d$,
and deductions follow the rules:
\begin{equation*}
  \tabrulelabelr[unary]{b-u}{\vec u+\vec v}{\vec v}{unary'}
  \qquad
  \tabrulelabelr[split]{b-s}{\vec u+\vec v_1+\vec v_2}{\vec v_1\quad \vec
    v_2}{split'}
\end{equation*}
for $\vec u$ in $T_u$, respectively $T_s$.%

The coverability problem then takes as input a BVAS $\?B$ of dimension
$d$, a root vector $\vec v_r$, and a leaf vector $\vec v_\ell$, both
in $\+N^d$, and asks for the existence of a deduction tree
using \eqref{b-u} and \eqref{b-s} with root labelled by some $\vec
v\geq\vec v_r$ for the product ordering over $\+N^d$ and leaves
labelled by $\vec v_\ell$.  Defining root judgements $\?B,\vec
v_\ell\jdg \vec v$ in the natural way, coverability then asks
whether there exists $\vec v\geq\vec v_r$ such that $\?B,\vec
v_\ell\jdg \vec v$.

\subsection{From BVAS to BVASS}
As mentioned earlier, BVAS coverability is easy to reduce to BVASS
coverability.  Given a BVAS coverability instance $\tup{\?B,\vec
v_r,\vec v_\ell}$ with $\?B=\tup{d,T_u,T_s}$, we build a BVASS
coverability instance $\tup{\?B',q_r,q_\ell}$ where
$\?B'\eqdef\tup{Q,d,T_u',T_s'}$ is equipped with a dummy state $q$,
one state $q_{\vec u}$ for each split rule $\vec u$ in $T_s$, and two
states $q_r$ and $q_\ell$.  Unary rules $\vec u$ in $T_u$ are encoded
as $q\xrightarrow{-\vec u}q$.  Split rules $\vec u$ in $T_s$ are
encoded as $q\xrightarrow{-\vec u}q_{\vec u}\to q+q$.  The root
condition is checked by a rule $q_r\xrightarrow{\vec v_r}q$, the leaf
condition by a rule $q\xrightarrow{-\vec v_\ell}q_\ell$.  Then a
deduction tree with root $\vec v$ and leaves $\vec v_\ell$ exists for
$\?B$ if and only if a deduction tree with root $(q,\vec v)$ and
leaves $(q,\vec v_\ell)$ exists for $\?B'$, and the result follows.

\subsection{From BVASS to BVAS}
The converse reduction is more involved.  \Citeauthor{demri12} sketch
in~\citep{demri12} a reduction that incurs a linear increase in the
dimension, but we need here to be more economical.

Given a vector $\vec v$ in $\+N^d$, we write $\|\vec v\|$ for its
infinite norm $\max_{0<i\leq d}\vec v(i)$.  For a set of vectors $T$,
$\|T\|$ then denotes $\max_{\vec v\in T}\|\vec v\|$.  When considering
BVAS(S) coverability problems, we assume a binary encoding of vectors,
e.g.\ $\|T_u\|\in 2^{O(|\?B|)}$.

\corbvass*
\begin{proof}
  Let us consider an instance $\tup{\?B,q_r,q_\ell}$ of the BVASS
  coverability problem, where $\?B=\tup{Q,d,T_u,T_s}$.  We construct a
  BVAS coverability instance $\tup{\?B',\vec v_r,\vec v_\ell}$ where
  $\?B'\eqdef\tup{6+d,T'_u,T'_s}$.  We do not assume $\?B$ to be
  ordinary, i.e.\ we have the following general rule for root
  judgements derived from unary rules in
  $T_u$: \begin{equation*} \frac{\?B,T,q_\ell\jdg q_1,\vec v+\vec u}
  {\?B,T\cup\{q\xrightarrow{\vec u}q_1\},q_\ell\jdg q,\vec
  v} \end{equation*}

  The idea of the reduction is to encode each state from the finite
  set $Q=\{q_0,\dots,q_{|Q|-1}\}$ using two extra dimensions.  For a
  state $q_i$ with $0\leq i<|Q|$, we write $\enc{q_i}$ for the pair
  $(i,|Q|-i)$.  This encoding verifies:
  \begin{align}\label{eq-binq}
    \enc q&\leq\enc{q'}&\text{ iff }&&q&=q'\;.
  \end{align}
  In order to implement unary and split rules from $\?B$ in $\?B'$, we
  actually introduce six new coordinates.  For a state $q$ in $Q$ and
  $0\leq k\leq 2$, we write $\enc{q}^k$ for the concatenation
  $0^{2k}\cdot\enc{q}\cdot 0^{2(2-k)+d}$.  For a vector $\vec u$ in
  $\+Z^d$, we also write $\vec u$ for its concatenation with $0^{6}$
  on its left.

  Let us define $\?B'$:
  \begin{align}
    \label{eq-bbu}
    T'_u&\eqdef\{\enc{q}^0-\enc{q_1}^1-\vec u\mid q\xrightarrow{\vec u}q_1\}\;,\\
    \label{eq-bbt}
    &\:\cup\:\{\enc{q}^1-\enc{q}^0\mid q\in
    Q\}\cup\{\enc{q}^2-\enc{q}^0\mid q\in Q\}\;,\\
    \label{eq-bbs}
    T'_s&\eqdef\{\enc{q}^0-\enc{q_1}^1-\enc{q_2}^2\mid q\to q_1+q_2\}\;,\\
    \label{eq-bbr}
    \vec v_r&\eqdef \enc{q_r}^0\;,\\
    \label{eq-bbl}
    \vec v_\ell&\eqdef\enc{q_\ell}^0\;.
  \end{align}
  Observe that $\|T_u'\|$ is in $O(|Q|+\|T_u\|)$, while $\|T'_s\|$,
  $\|v_r\|$ and $\|v_\ell\|$ are in $O(|Q|)$.
  We prove the correction of this construction by a series of claims:
  \begin{claim}\label{cl-bb1}
    If $\?B,q_\ell\jdg q,\vec v$, then $\?B',\vec
    v_\ell\jdg\enc{q}^0+\vec v$.
  \end{claim}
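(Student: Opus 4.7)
The claim is the ``completeness'' direction of the state-encoding reduction from BVASS to BVAS, and my plan is a direct induction on the derivation of the root judgement $\?B, q_\ell \jdg q, \vec v$, synthesising a matching $\?B'$-deduction tree with root $\enc{q}^0 + \vec v$ at each step. The design I will exploit is that each BVASS rule has been split in $\?B'$ into two phases: a ``main'' rule from \eqref{eq-bbu} or \eqref{eq-bbs} that consumes $\enc{q}^0$ at the parent and produces $\enc{q_i}^{k}$ at each of its $1 \leq k \leq 2$ children, followed by the translation rules of \eqref{eq-bbt}, each of which rewrites $\enc{q_i}^k$ back to $\enc{q_i}^0$, placing the state code back into its ``input'' pair so that a subsequent rule application is available.

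For the base case $\?B, q_\ell \jdg q_\ell, \vec 0$, the one-vertex $\?B'$-tree labelled by $\vec v_\ell = \enc{q_\ell}^0$ immediately witnesses $\?B', \vec v_\ell \jdg \vec v_\ell$. In the unary inductive step, with a rule $q \xrightarrow{\vec u} q_1$ applied above $\?B, q_\ell \jdg q_1, \vec v + \vec u$, the induction hypothesis produces a $\?B'$-tree $\tau_1$ rooted at $\enc{q_1}^0 + \vec v + \vec u$; I first extend $\tau_1$ upward by the translation rule $\enc{q_1}^1 - \enc{q_1}^0 \in T_u'$ to obtain a tree with root $\enc{q_1}^1 + \vec v + \vec u$, then extend it once more by the rule $\enc{q}^0 - \enc{q_1}^1 - \vec u \in T_u'$ from \eqref{eq-bbu}, producing the desired root $\enc{q}^0 + \vec v$. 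In the split case associated with a rule $q \to q_1 + q_2$ above $\?B, q_\ell \jdg q_i, \vec v_i$ for $i \in \{1,2\}$, I analogously lift each sub-tree $\tau_i$ using the translation rule $\enc{q_i}^i - \enc{q_i}^0$ to get root $\enc{q_i}^i + \vec v_i$, and then join the two lifted trees via the split rule $\enc{q}^0 - \enc{q_1}^1 - \enc{q_2}^2 \in T_s'$ from \eqref{eq-bbs} to obtain root $\enc{q}^0 + \vec v_1 + \vec v_2$.

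The only verification needed is that every intermediate configuration I introduce lies in $\+N^{6+d}$: on the last $d$ coordinates we see only vectors of the form $\vec v$, $\vec v + \vec u$, or $\vec v_i$ that already appear as valid BVASS configurations and are therefore nonnegative; on the first six coordinates each intermediate configuration carries exactly one summand $\enc{q'}^k$ for some $q' \in Q$ and $0 \leq k \leq 2$, which is nonnegative by construction of the encoding. I do not expect a real obstacle here, since the six auxiliary coordinates of $\?B'$ have been engineered precisely to make this two-step simulation of each BVASS rule go through; the harder converse direction, which would need property~\eqref{eq-binq} to force state codes to line up across the branches of a split, is not what this claim asks for.
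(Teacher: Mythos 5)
Your proof is correct and follows essentially the same route as the paper's: an induction on the root judgement in $\?B$, with the base case handled by \eqref{eq-bbl}, and each unary or split step simulated by first applying a translation rule from \eqref{eq-bbt} to move the state code from position $0$ to position $1$ (or $2$), then applying the corresponding rule from \eqref{eq-bbu} or \eqref{eq-bbs}. Your explicit check that all intermediate configurations stay in $\+N^{6+d}$ is a welcome addition that the paper leaves implicit.
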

  By induction on the structure of a proof for the judgement
  $\?B,q_\ell\jdg q,\vec v$.
  \begin{itemize}
  \item For the base case,
    i.e.\ $\?B,q_\ell\jdg q_\ell,\vec 0$, then $\?B',\vec
    v_\ell\jdg\vec v_\ell=\enc{q_\ell}^0+\vec 0$ by~\eqref{eq-bbl}.
  \item For the induction step, first assume $\?B,q_\ell\jdg q,\vec
    v$ as the result of a unary rule $q\xrightarrow{u}q_1$.  Then by
    induction hypothesis, $\?B',\vec v_\ell\jdg \enc{q_1}^0+\vec
    v+\vec u$.  Applying~\eqref{eq-bbt} yields $\?B',\vec
    v_\ell\jdg\enc{q_1}^1+\vec v+\vec u$, from which~\eqref{eq-bbu}
    shows $\?B',\vec v_\ell\jdg\enc{q}^0+\vec v$ as desired.
  \item Still for the induction step, assume the root judgement
    results from a split rule $q\to q_1+q_2$.  Then by induction
    hypothesis, $\?B',\vec v_\ell\jdg\enc{q_1}^0+\vec v_1$, from
    which~\eqref{eq-bbt} yields $\?B',\vec
    v_\ell\jdg\enc{q_1}^1+\vec v_1$, and also by induction
    hypothesis $\?B',\vec v_\ell\jdg\enc{q_2}^0+\vec v_2$, from
    which~\eqref{eq-bbt} yields $\?B',\vec
    v_\ell\jdg\enc{q_2}^2+\vec v_2$.  Applying~\eqref{eq-bbs} then
    shows $\?B',\vec v_\ell\jdg\enc{q}^0+\vec v_1+\vec v_2$ as
    desired.
  \end{itemize}

  \begin{claim}\label{cl-bb2}
    If $0\leq k\leq 2$ and $\?B',\vec v_\ell\jdg\enc{q}^k+\vec v$,
    then $\?B,q_\ell\jdg q,\vec v$.
  \end{claim}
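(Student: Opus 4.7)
The plan is to proceed by structural induction on the derivation witnessing $\?B',\vec v_\ell\jdg\enc{q}^k+\vec v$. The recurring tool is the key property~\eqref{eq-binq}: since $\enc{q}\le\enc{q'}$ forces $q=q'$, the simulated $\?B$-state can be read off unambiguously whenever a rule of $\?B'$ tries to consume $\enc{q'}$ from a pair slot.

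Before the induction proper, I would establish two auxiliary invariants valid at every node of any sub-derivation $\?B',\vec v_\ell\jdg\vec w$: (i) the total sum of the six pair coordinates of $\vec w$ equals $|Q|$, and (ii) the sum of the two coordinates in each individual pair of $\vec w$ is a multiple of $|Q|$. Both follow from a routine bottom-up induction: leaves carry individual pair sums $(|Q|,0,0)$; each unary rule from~\eqref{eq-bbu} or~\eqref{eq-bbt} changes individual pair sums by $\pm|Q|$ or $0$ and preserves the total; the split rule~\eqref{eq-bbs} reduces the total by $|Q|$, exactly compensating for the surplus $|Q|$ introduced by the extra leaf that a split adds to the tree.

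For the main induction, the base case is a single leaf, where $\enc{q}^k+\vec v=\enc{q_\ell}^0$; since $\enc{q}$ is non-zero this forces $k=0$, then~\eqref{eq-binq} gives $q=q_\ell$, and $\vec v=\vec 0$. For the inductive step I would case-split on the rule applied at the root. Every rule of $\?B'$ has its positive part concentrated in a single chunk $\enc{q'}^{k'}$ in one pair slot, so applicability forces pair $k'$ of the root to dominate $\enc{q'}$; since that pair is $\enc{q}$ when $k=k'$ and $\vec 0$ otherwise, we must have $k=k'$ and then $q=q'$ by~\eqref{eq-binq}. In the cases~\eqref{eq-bbu} and~\eqref{eq-bbt} the single child has the form $\enc{q''}^{k''}+\vec v''$ with $\vec v''\in\+N^d$, so the induction hypothesis applies; for~\eqref{eq-bbt} this directly gives the goal, and for~\eqref{eq-bbu} one further applies the corresponding unary rule of $\?B$ (noting that $\vec v+\vec u\in\+N^d$ because the child value is nonnegative).

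The main obstacle is the split case~\eqref{eq-bbs}, where the two children satisfy $\vec v_1+\vec v_2=\enc{q_1}^1+\enc{q_2}^2+\vec v$ with the distribution \emph{a priori} arbitrary. Invariants (i) and (ii) rigidify this split: pair $0$ of both children must vanish; each child has total pair sum $|Q|$; and the pair $1$ sum of $\vec v_1$, being a nonnegative multiple of $|Q|$ and bounded above by $|Q|$ (because pair $1$ of $\vec v_1$ is pointwise dominated by $\enc{q_1}$), lies in $\{0,|Q|\}$, and similarly for pair $2$. Whenever such a pair sum of $\vec v_1$ equals $|Q|$, the pointwise domination by $\enc{q_j}$ saturates, forcing the pair coordinates of $\vec v_1$ to equal $\enc{q_j}$ exactly. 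Combined with the total pair sum $|Q|$, exactly two scenarios remain: either $\vec v_1=\enc{q_1}^1+\vec v_1'$ and $\vec v_2=\enc{q_2}^2+\vec v_2'$, or the symmetric swap, with $\vec v_1',\vec v_2'\in\+N^d$ and $\vec v_1'+\vec v_2'=\vec v$. The induction hypothesis applied to each child, together with $\?B$'s split rule $q\to q_1+q_2$, closes the proof.
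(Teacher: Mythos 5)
Your proof is correct and follows essentially the same route as the paper's: a structural induction on the derivation with a case analysis on the last rule applied, using \eqref{eq-binq} to recover the simulated state and then replaying the corresponding rule of $\?B$. The only difference is one of explicitness: the paper disposes of the shape of the premises (in particular the a priori arbitrary distribution of the vector between the two children of a split) via a separate remark that every derivable configuration has the form $\enc{q}^k+\vec v$, left as ``an easy induction'', whereas your invariants on pair sums spell out exactly the rigidity argument needed there.
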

  By induction on the structure of a proof for the judgement $\?B',\vec
  v_\ell\jdg\enc{q}^k+\vec v$.
  \begin{itemize}
  \item For the base case, $\?B',\vec v_\ell\jdg\enc{q_\ell}^0$
    by~\eqref{eq-bbl}, and indeed $\?B,q_\ell\jdg q_\ell,\vec 0$.
  \item For the induction step, if the judgement results
    from~\eqref{eq-bbu}, then $\?B',\vec v_\ell\jdg\enc{q_1}^1+\vec
    v+\vec u$ with $k=0$.  By induction hypothesis $\?B,q_\ell\jdg
    q_1,\vec v+\vec u$, from which $q\xrightarrow{\vec u}q_1$ yields
    the desired root label.
  \item If the judgement results from~\eqref{eq-bbt}, then $k>0$ and
    the judgement has $\?B',\vec v_\ell\jdg \enc{q}^0+\vec v$ as premise,
    and the induction hypothesis allows to conclude directly.
  \item If the judgement results from~\eqref{eq-bbs}, then $\?B',\vec
    v_\ell\jdg\enc{q_1}^1+\vec v_1$ and $\?B',\vec
    v_\ell\jdg\enc{q_2}^2+\vec v_2$.  By the induction hypothesis
    $\?B,q_\ell\jdg q_1,\vec v_1$ and $\?B,q_\ell\jdg q_2,\vec
    v_2$, and an application of $q\to q_1+q_2$ yields the desired
    result.
  \end{itemize}
  In order to conclude on the correction, observe by an easy induction
  that $\?B',\vec v_\ell\jdg\vec w$ implies $\vec w=\enc{q}^k+\vec
  v$ for some $0\leq k\leq 2$, $q$ in $Q$, and $\vec v$ in $\+N^d$.
  Thus $\vec w\geq\vec v_r$ if and only if $\vec
  w=\enc{q_r}^0+\vec v$ for some $\vec v$ in $\+N^d$, and by the
  previous claims and \eqref{eq-binq}, if and only if
  $\?B,q_\ell\jdg q_r,\vec v$.
  
  Finally, regarding complexity, \citet{demri12} define in the
  proof of their \theoremautorefname~8 $L\eqdef \|T'_u\cup
  T'_s\|+\|\vec v_r\|+2$, which is in $O(|Q|+\|T_u\|)$, $H\eqdef
  L^{(3d)!}$, and $B\eqdef H^2$, and show that, if $\?B',\vec
  v_\ell\jdg \vec w\geq\vec v_r$, then there is a proof of the
  judgement of height at most $H$ and using vector values truncable to
  $B$.  The existence of such a proof can be established with an
  alternating Turing machine working in space $\log B$.  Since $\log
  B$ is in $2^{O(d\cdot\log d\cdot\log\log(|Q|+\|T_u\|))}$, this
  yields the stated bound on the complexity.
\end{proof}

\bibliographystyle{abbrvnat}
\bibliography{journalsabbr,conferences,relevance}

\end{document}
